\numberwithin{equation}{section}
\newcommand{\R}{{\mathord{\mathbb R}}}
\newcommand{\Z}{{\mathord{\mathbb Z}}}
\newcommand{\N}{{\mathord{\mathbb N}}}
\newcommand{\C}{{\mathord{\mathbb C}}}
\newcommand{\HH}{\mathcal{H}}
\newcommand{\FF}{\mathcal{F}}
\newcommand{\hh}{\mathfrak{h}}
\newcommand{\ran}{{\rm Ran}}
\def\one{{\sf 1}\mkern-5.0mu{\rm I}}
\def\inf{{\rm inf}\,}
\def\dist{{\rm dist}\,}
\newcommand{\inn}[1]{\left\langle {#1} \right\rangle }
\newtheorem{lemma}{Lemma}
\newtheorem{theorem}[lemma]{Theorem}
\newtheorem{remark}[lemma]{Remark}
\newtheorem{proposition}[lemma]{Proposition}
\newtheorem{definition}[lemma]{Definition}
\newtheorem{hyp}{Hypothesis}
\numberwithin{lemma}{section}
\begin{document}
\title{On one photon scattering in non-relativistic  qed}
\author{\vspace{5pt}  David G. Hasler\footnote{
E-mail: david.hasler@uni-jena.de} 
 \\
\vspace{-4pt} \small{Institute for  Mathematics,
University of Jena} \\ \small{Jena, DE}\\
}
\date{}
\maketitle

\begin{abstract}
We consider  scattering of  a single photon  by  an atom or a molecule in
the framework of non relativistic qed, and   we   express the scattering matrix  for one photon scattering
as a boundary value of the resolvent.
\end{abstract}

\section{Introduction}

Rayleigh scattering  describes the scattering of  light by bound  particles much
smaller than the wavelength of the radiation.   We study such a  process in the frame work of non-relativistic
qed.  Non-relativistic qed is a mathematical rigorous model  describing low energy aspects of  the quantized electromagnetic field  interacting with
non-relativistic quantum mechanical matter \cite{PauliFierz.1938}. Mathematical aspects of Rayleigh scattering    has been intensively studied for this and   related  models \cite{DerezinskiGerard.1999,FrohlichGriesemerSchlein.2001,FrohlichGriesemerSchlein.2002,DerezinskiGerard.2004,GriesemerZenk.2010,FaupinSigal.2014,DeroeckGriesemerKupiainen.2015}.  Existence of asymptotic creation  and annihilation operators as well as asymptotic completeness has been established.

 We define the scattering matrix   as the inner product  of   asymptotically incoming photon  states with    asymptotically outgoing photon
states.
Physically, the scattering  matrix   describes the emission and absorption of photons  by
an atom or molecule.   It can be used for example  to  determine   the absorption spectrum, observe
Bohr's frequency condition, and to detect properties of   resonances.
More specifically,
we  consider in this paper  the one photon scattering matrix and relate it to boundary values of the  resolvent.  The resulting
expression resembles  a special case  of a  LSZ  reduction formula  \cite{LehmannSymanzikZimmermann.1955,BjorkenDrell.1965}
obtained in   \cite[(III.13)]{BachFroehlichPizzo.2007}
after  performing  a formal  time integration.   The relation which we obtain, can  be used
to calculate expansions of the one photon scattering matrix using well established results about
 the ground state, resonance states, and  boundary values of the resolvent
\cite{BachFroehlichSigal.1998b,BachFroehlichSigal.1999,Sigal.2009,HaslerHerbstHuber.2008,GriesemerHasler.2009,HaslerHerbst.2011a,HaslerHerbst.2011b}.
We note that an analogous relation has been obtained for the spin boson model, where the
  leading order  expansion has been studied in connection with resonances \cite{BallesterosDeckertHaenle.2019,BallesterosDeckertHaenle.2020}.
For the hydrogen atom with a spinless electron  expansions of the scattering matrix have been studied in  \cite{BachFroehlichPizzo.2007} using the LSZ reduction  formalism
combined with iterated perturbation theory. In the present    paper we  provide  a different approach, which   gives   an explicit relation of the scattering matrix
to boundary values of the resolvent.
Thus it can be used to relate    resonance eigenvalues  to singularities of the scattering matrix.

To prove our result we make several assumptions, which have been established  in the literature in various  situations.
We assume the existence of a ground state, the  existence of wave operators, and regularity properties
of boundary values of the resolvent.
In Section \ref{modres} we introduce the model and state the main result.
The proofs are given in Section  \ref{sec:proofs}. In Section \ref{sec:veri}  we verify the  Hypotheses about boundary values of the resolvent for explicit models.

The idea of the proof of the main result  is as follows. First we define  the asymptotic creation and annihilation operators.
Then  we  express  the asymptotic creation operators using Cook's method as an integral over the  current, which is formulated     in Lemma  \ref{lem:cook}, cf. \cite{BachFroehlichPizzo.2007}. Using this identity
we relate matrix elements  of the current   with  one particle scattering states  and
 the  ground state    to the
T-matrix, which is stated     in Proposition \ref{eq:propTmat}.
 The $T$-matrix,  defined in \eqref{defofT}, is given as a  matrix element of boundary values of the resolvent.  The method can be viewed as a generalization   of  one particle scattering in quantum mechanics, cf.
 \cite{Simon.1971,ReedSimon.1979}   ,
 and is  analogous to the corresponding relation obtained in
 \cite{BallesterosDeckertHaenle.2020} for  the spin boson model. To show Proposition \ref{eq:propTmat} we use
an Abelian limit to integrate out the time evolution  by means of the following  identity for a self-adjoint operator $A$
\begin{align*} 
\int_0^\infty  e^{- i t (A-i\epsilon)} dt = \frac{-i}{A-i\epsilon}   ,
\end{align*}
which is understood as a strong
Riemann-integral  and $\epsilon > 0$ is a regularization. Then we use
a  pull-through formula, cf.  \eqref{eq:pull}, and assumptions about   boundary values of the resolvent to
take the limit $\epsilon \downarrow 0$. We then use once more  the integral representation of the asymptotic creation operators,
 Lemma   \ref{lem:cook},    their relation to the $T$-matrix,  Proposition \ref{eq:propTmat},
and the canonical commutation relations, which are  satisfied by    the  asymptotic creation and annihilation operators
\cite[Theorem 4]{FrohlichGriesemerSchlein.2001}.
Combining these results   with   assumed regularity properties about boundary values of the resolvent, we can
 interchange integrals and relate
the scattering matrix to the boundary value of the resolvent.

In Section \ref{sec:veri}, we verify the assumed regularity properties using analytic dilation \cite{BalslevCombes.1971,Simon.1973a}.
In particular,  we combine a result about exponential decay of analytic extensions of
the ground state  \cite{HaslerLejsek.2022} with estimates of analytic dilations of the  resolvent  \cite{BachFroehlichSigal.1999,HaslerHerbstHuber.2008}.

We note that in contrast to an analogous result for the simpler spin boson model with a mild infrared regularization
 \cite{BallesterosDeckertHaenle.2020},  we have  to deal with the electronic degrees of freedom,
which are described by an infinite dimensional space.  Furthermore,  we do not  impose any infrared regularization.
In  \cite{BallesterosDeckertHaenle.2020} a  multiscale analysis was central to the prove of  the result.
In the present paper we,  a priori, do not need such an analysis. 
 The emphasis of our proof is to isolate
abstract hypotheses which are needed for the main result.  Moreover, we show that these  hypotheses can be verified
 using  that the ground is dilation analytic and has spatial exponential decay  \cite{HaslerLejsek.2022}.
A result  which   is involved to show,  but is of interest of its own.
Nevertheless, we  believe that the hypotheses made in the present paper   could alternatively be  established  using more accessible
 methods such as   Mourre theory \cite{Mourre.1980}. Moreover, we want to mention that the ideas of the proof of   the present paper
  have been applied  to   the related  Nelson model \cite{Lejsek.2022}.

\section{Model and Statement of Result} \label{modres}

Let $\hh$ be a Hilbert space. We denote the Bosonic
Fock space over $\hh$ by $\mathcal{F}(\hh)$. We define $\hh^{\otimes n } := \bigotimes_{j=1}^n  \hh $ for $n \in \N$ and set $\hh^{\otimes 0 } = \C$.
Let $\mathfrak{S}_n$ denote the permutation group of $\{1,...,n\}$. For $\sigma \in \mathfrak{S}_n$ we define a linear  operator, also denoted  by $\sigma$,  on basis elements of $\hh^{\otimes n} $
by
$$
\sigma (\varphi_1 \otimes \varphi_2 \otimes \cdots \otimes \varphi_n ) =  \varphi_{\sigma(1)} \otimes \varphi_{\sigma(2)} \otimes \cdots \otimes \varphi_{\sigma(n)}  ,
$$
 extend  it first linearly and then by  taking  the  closure  to a bounded operator on $\hh^{\otimes n}$. Define $S_n = \frac{1}{n!} \sum_{\sigma \in \mathfrak{S}_n } \sigma$ and $S_0 =  \one$.
The Fock space is defined by
$$
\FF(\hh) := \bigoplus_{n=0}^\infty \FF_n(\hh) ,\quad \FF_n(\hh) := S_n  (\hh^{\otimes n} ) .
$$
For $h \in \hh$ the creation operator,  $a^*(h)$, is defined on elements $\eta \in \FF_n(\hh)$  by
$$
a^*(h) \eta    = (n+1)^{1/2}    S_{n+1}  ( h \otimes   \eta  ) ,
$$
and extends linearly to a densely defined closed linear operator in $\FF(\hh)$. Its adjoint, $a(h)$,
is called the annihilation operator.
It follows from the definition that the  operators $a^*(h)$ and $a(h)$ obey the usual canonical commutation relations (CCR)
$$
[a(g),a^*(h)] = \inn{ g , h } , \quad [ a^\#(g), a^\#(h) ] =  0 ,
$$
here $\inn{ g , h }$ denotes the inner product in $\hh$ and $a^\#(\cdot)$ stands for  either  $a^*(\cdot)$ or $a(\cdot)$.  We define the Bosonic field operator by
$$
\phi(h) =  \frac{1}{\sqrt{2}} \overline{  ( a(h) + a^*(h) )  }^{\rm cl}    ,
$$
where $\overline{( \cdot )}^{\rm cl}$ stands for the operator  closure.
In this paper we consider photons, i.e.,
relativistic transversally polarized Bosons, and  choose henceforth   $$\mathfrak{h} = L^2(\R \times \Z_2 ) \cong L^2(\R^3 ; \C^2) . $$
Let  $\omega(k) = |k|$.   The   operator of multiplication by  $| \cdot |$ shall  be denoted by  $\omega$ as well. The operator $H_{\rm f}$ in $\FF(\hh)$ is defined    by
$$
H_{\rm f}|_{\FF_n(\hh)}  =  \sum_{j=1}^n \left( \one^{\otimes (j-1)} \otimes  \omega  \otimes \one^{\otimes (n-j-1)} \right)
$$
and linearly extends to a self-adjoint operator in $\FF(\hh)$.
The Hilbert space of  $N \in \N$ non-relativistic particles  is assumed to be
\begin{align} \label{haattdeff}
\HH_{\rm at} := L_{ \#}^2((\R^{3} \times  \Z_{2s+1})^N)  ,
\end{align}
where $s \in \{ 0,1/2 \}$ denotes  the spin of the non-relativistic particles,   $L_\#^2$ stands for the  subspace of either symmetric or antisymmetric elements with respect to interchange of  particle coordinates, c.f., \eqref{eq:permfun}. We shall denote the spacial particle coordinates by   $x_j$, $j=1,...,N$.
 The Laplacian on $\R^{3N}$ is written as usual by the symbol $\Delta$.
The potential describing the forces acting on the non-relativistic particles is denoted by $V$. We assume  that   $V \in L_{\rm loc}^2(\R^{3N})$, that $V$ is   symmetric with respect to permutations of particle coordinates, and
that  $V_-  := \max(0,-V) $ satisfies the following  relative form bound.  For   all $\epsilon > 0$  there exists a constant $C_\epsilon$ such that
\begin{equation} \label{eq:potas}
V_- \leq \epsilon( - \Delta) + C_\epsilon .
\end{equation}
By $S_j$ we shall denote the spin operator of the $j$-th particle. If $s=0$, then  $S_j = 0$, and  if  $s=1/2$,   then
$$(S_j)_l = \one^{\otimes ( j-1)}  \otimes  \frac{1}{2} \sigma_l \otimes \one^{\otimes (N-j-1) } ,   \quad l=1,2,3,
$$
with  $\sigma_1,\sigma_2,\sigma_3$ denoting the Pauli matrices.
The Hilbert space of the total system is
$$
\mathcal{H} := \HH_{\rm at} \otimes \FF(\hh)     .
$$
 To describe the interaction we
 introduce  the following  weighted $L^2$-space. For  $h \in L^2(\R^3;\C^2) \cong L^2(\R^3 \times \Z_2)$ we define
$$
\| h \|_\omega := \left( \sum_{\lambda=1,2} \int |h(k,\lambda)|^2 \left(1+ |k|^{-1} \right) dk \right)^{1/2}
$$
and
$$
\hh_\omega := L_\omega^2(\R^3 \times \Z_2)   := \{ h \in  L^2(\R^3 \times \Z_2)  : \| h \|_\omega  < \infty \} .
$$
The interaction between non-relativistic particles and the quantized field is described by the quantized vector
potential in Coulomb gauge and the quantized magnetic field, which are defined  by
$$
A_l(x) = \phi(G_{x,l})  \quad  \text{ and } \quad B_l(x) = \phi(J_{x,l})  , \quad x \in \R^3, \quad l=1,2,3,
$$
respectively, where
\begin{align*}
G_{x,l}(k,\lambda)  & = \frac{1}{(2\pi)^{3/2}} \frac{\kappa(k)}{\sqrt{|k|}} \varepsilon_l(k,\lambda) e^{-ik \cdot x} , \\  J_{x,l}(k,\lambda)  & =\frac{ 1}{(2\pi)^{3/2}}   \frac{\kappa(k)}{\sqrt{|k|}}  [ (-i  k)  \wedge \varepsilon(k,\lambda) ]_l e^{-ik \cdot x} ,
\end{align*}
$k  \in \R^3 \setminus \{ 0 \}$, and $\lambda \in  \Z_2$.
 Here,   $\varepsilon(k,1), \varepsilon(k,2) \in \R^3$  are the   polarization vectors.
We   assume that they  only depend on    $k/|k|$, for $k \in \R^3 \setminus \{ 0 \} $,   that  $(k/|k|,\varepsilon(k,1),\varepsilon(k,2))$ forms
an orthonormal basis  in $\R^3$, and that they are   measurable functions  of  their arguments.
 Moreover, we  assume that   $\kappa$ is  a measurable function from  $\R^3$  to $\C$ such that (or more precisely for its canonical extension $\kappa(k,\lambda) = \kappa(k)$)
  \begin{equation} \label{eq:assonfield}
   \omega^{-1/2} \kappa  \in L_\omega^2(\R^3 \times \Z_2) \quad  \text{ and if  } s = 1/2  , \, \text{ also }  \quad  \omega^{1/2} \kappa \in L_\omega^2(\R^3 \times \Z_2)  . \end{equation}
Physically, the function $\kappa$ is used as an ultraviolet cutoff and  incorporates  the coupling strength (Eq.   \eqref{eq:assonfield} holds  for example if
$\kappa(k) = e 1_{|k| < \Lambda}$ with $e \in \R$ the electric charge and $\Lambda \in (0,\infty)$ an ultraviolet cutoff).
The    Hamiltonian of the  total system is given by
\begin{equation} \label{eq:defofham}
H  = \sum_{j=1}^N  \{ (p_j + A(x_j))^2 +   \mu S_j \cdot B(x_j) \}  + V \otimes \one  + \one \otimes H_{\rm f} ,
\end{equation}
where $\mu \in \R$  (the  value for the standard model of non-relativistic qed is     $\mu = 2$) and $p_j = - i  \partial_{x_j}$.
 In \eqref{eq:defofham} and henceforth, we adopt the convention that for a vector of  operators $T =( T_1,T_2,T_3)$ we write $T^2$ for $ T \cdot T$. Furthermore, for notational compactness we shall suppress the tensor product in the notation of   operators, if it is clear from the context on which factor
an operator acts.
The definition  \eqref{eq:defofham} yields a priori a symmetric operator   on a  suitable dense subspace, $\mathcal{C}$,  of $\HH$ which we choose
to be equal  to the linear span of the set
$$\{ f \otimes  \eta  : f  \in \HH_{\rm at} \text{ with }  f  \text{ in }  C^\infty_c, \,  \eta \in S_n( C_c^\infty(\R^3 \times \Z_2)^{\otimes n } )  , n \in \N_0  \}.  $$
Using  assumptions \eqref{eq:potas}  and \eqref{eq:assonfield} it follows from Lemma    \ref{FGS.Lemma18}, in the appendix, that the operator is bounded from below.
This allows  to define
a self-adjoint operator  by means of  the Friedrichs extension, see for example \cite[Theorem X.23]{ReedSimon.1975}. 
Henceforth  we shall  denote by $H$  this self-adjoint extension. We assume that $H$ is essentially self-adjoint on $\mathcal{C}$ and that
\begin{equation}\label{domprob} \mathcal{D}(H) \subset \mathcal{D}((-\Delta +  H_{\rm f})^{1/2}) , \end{equation}
where the right hand side denotes the natural domain of  $(-\Delta +  H_{\rm f})^{1/2}$.
For a closed linear operator $T$ in a Hilbert space  we shall denote by $\sigma(T)$
is spectrum and by $\rho(T)$ its resolvent set.
Let
$$
E_{\rm gs} = \inf \sigma(H) .
$$

\begin{remark} \label{rem:def} {\rm  We note that if $V$ is infinitesimally bounded with respect to $-\Delta$, which is the case for Coulomb potentials, then
the assumptions on $V$ are satisfied. Clearly,  \eqref{eq:potas} holds. Furthermore,  $H$ is essentially self-adjoint on $\mathcal{C}$
and the domain of $H$ is equal to $\mathcal{D}(-\Delta + H_{\rm f})$,  the natural domain of $-\Delta  +H_{\rm f}$, \cite{Hiroshima.2002,HaslerHerbst.2008b}, and hence \eqref{domprob} is satisfied. 
}
\end{remark}

\vspace{0.2cm}

\noindent

Introducing  annihilation operators,  $a(k,\lambda)$, as  distributions defined  in Equation
\eqref{eq:defofpsi} in the appendix,  we can write  in the sense of forms on a dense subspace (e.g. finite linear combinations of finite tensor products of compactly supported smooth functions)  the quantized vector potential and magnetic field as
\begin{align}
A(x)  & = \frac{1}{(2\pi)^{3/2}}  \sum_{\lambda=1,2} \int    \frac{\varepsilon(k,\lambda) }{\sqrt{2 |k|}}  \{ \overline{\kappa(k)} e^{ i k  \cdot x } a(k,\lambda) +\kappa(k)
e^{-ik \cdot x} a^*(k,\lambda) \} dk  \label{hfdist1}  \\
B(x)  & =  \frac{1}{(2\pi)^{3/2}}  \sum_{\lambda=1,2} \int   \frac{i k \wedge \varepsilon(k,\lambda) }{\sqrt{2 |k|}}\{ \overline{\kappa(k)} e^{ i k  \cdot x } a(k,\lambda) - \kappa(k)
e^{-ik \cdot x} a^*(k,\lambda) \}  dk  ,
\end{align}
respectively,
and  the field energy as
\begin{align}\label{hfdist2}
H_{\rm f}  &=  \sum_{\lambda=1,2} \int  |k| a^*(k,\lambda) a(k,\lambda) dk .
\end{align}
Here,    $a^*(k,\lambda)$ denotes  the  formal adjoint of $a(k,\lambda)$
 and the integrals \eqref{hfdist1}--\eqref{hfdist2} are understood in the weak sense.

To state the main result, we shall introduce   several hypotheses.
It has been shown in the literature that these assumptions  hold  true in various situations.

\begin{hyp}\label{hyp:exgs} The number $E_{\rm gs}$ is an eigenvalue of $H$, i.e., the operator   $H$ has a square integrable   ground state.
\end{hyp}

Assuming Hypothesis \eqref{hyp:exgs} we will denote by   $\psi_{\rm gs}$
  a  normalized  eigenvector with eigenvalue $E_{\rm gs}$.

\begin{remark}{\rm For small coupling, i.e., $\|\omega^{-1/2} \kappa \|_\omega$ and  $\|\omega^{1/2} \kappa \|_\omega$  small, the
existence of ground states has been established in \cite{BachFroehlichSigal.1998b}.
For   large coupling existence  has been shown  in \cite{GriesemerLiebLoss.2001},
where   it is  assumed that $\kappa(k) = e 1_{|k| \leq \Lambda}$  with $e \in \R$ and $\Lambda \in [0,\infty)$
arbitrary.
}
\end{remark}

\vspace{0.2cm}

Let us now   introduce the so  called asymptotic creation and annihilation  operators.
They are defined for  $h \in \hh$ on  vectors $\psi \in \HH$     by
\begin{align}  \label{defofasy}
a_{\substack{ {\rm out} \\ {\rm in }}}^\#(h) \psi& = \lim_{t \to \pm \infty} e^{ i H t} e^{- i H_{\rm f} t } a^\#(h)   e^{ i H_{\rm f}  t} e^{- i H  t } \psi ,
\end{align}
provided the limit exists. One refers to  $a_{\rm in}^\#$  and $a_{\rm out}^\#$ as an incoming and  outgoing
asymptotic operator, and   \eqref{defofasy}  is called an  incoming and outgoing state, respectively.
 We note that as a consequence of the definitions it is straight forward to see that for  $h \in \hh$
\begin{equation}
\label{defofasy2}
e^{- i H_{\rm f} t } a^\#(h)   e^{ i H_{\rm f}  t} = a^\#( e^{ -  i \omega t } h )   .
\end{equation}
Furthermore, we shall assume the following hypothesis.

\begin{hyp}\label{thm:fgs01thm4} 
There exists an $E > E_{\rm gs}$, such that for all $g,h  \in \hh_\omega$
the following holds.
\begin{enumerate}[(i)]
\item  \label{(i)} For $\varphi \in \HH$ with  $\varphi = 1_{H \leq E} \varphi$  the  limits
$$
a_{\substack{ {\rm out} \\ {\rm in }}}^\#(h) \varphi = \lim_{t \to \pm \infty} e^{i H t}  e^{- i H_{\rm f} t} a^\#( h) e^{ i H_{\rm f} t}  e^{- i H t } \varphi
$$
exist.  Furthermore, there exists a constant $C > 0$ such that for all $f \in \hh_\omega$
\begin{equation} \label{formalex}
\| a_{\substack{ {\rm out} \\ {\rm in }}}^\#(f) 1_{H \leq E} \| \leq C \| f \|_\omega .
\end{equation}
\item \label{(ii)} The canonical commutation relations
$$
[a_{\rm  in}(g),a_{\rm  in}^*(h) ] = \inn{ g,h }  \quad \text{and} \quad [a_{\rm  in}^\#(g),a_{\rm in}^\#(h) ] =  0
$$
hold true, in form-sense, on $1_{H \leq E} \HH$. If $\psi \in 1_{H  =  E_{\rm gs}} \HH$, then
\begin{equation} \label{anngs}
a_{\rm in}(g)  \psi = 0 .
\end{equation}
\end{enumerate}
\end{hyp}

\begin{remark}{\rm
We note that Hypothesis \ref{thm:fgs01thm4} has been shown in
 \cite[Theorem 4]{FrohlichGriesemerSchlein.2001} in the spinless case.
There it was assumed that $\kappa$ has compact support and that the potential $V$ is given as a sum of one body
and two body potentials.
Specifically, we  note that  from the assertion  of  Theorem 4 in \cite{FrohlichGriesemerSchlein.2001} Eq.  \eqref{anngs}
follows  first for $g \in \hh_\omega$ with $m := \inf \{|k|: g(k) \neq 0 \} > 0$,
since $ a_{\rm in}(g)  {\rm Ran}  1_{H  =  E_{\rm gs}}  \subset {\rm Ran} 1_{H  \leq E_{\rm gs} - m } = \{ 0 \} $,
 where the inclusion holds by Part (iv) of Theorem  4    in   \cite{FrohlichGriesemerSchlein.2001}.
Then   \eqref{anngs}  extends to all $g \in \hh_\omega$ in view of  \cite[  Eq. (39) in Theorem 4]{FrohlichGriesemerSchlein.2001}.

}
\end{remark}

To formulate the main result, we define  the following commutators, which are  related
to the electric  current, see for example  \cite{BachFroehlichPizzo.2007}.
Calculating a  commutator we find for $\psi \in \mathcal{C} $ and $h \in \hh$, that
\begin{align}
    \sum_{j}\left[ ( p_j   +  A(x_j))^2  + \mu S_j  \cdot B(x_j)     ,  a^*( h ) \right]  \psi  &  =   \sum_{\lambda=1,2} \int_{\R^3}    h(k,\lambda) D_1(k,\lambda)    \psi  dk  \label{comm1}  ,
\end{align}
(using $k \cdot \varepsilon(k,\lambda) = 0$)  where we  defined  for $k \in \R^3 \setminus \{ 0 \}$ and $\lambda \in \Z_2$  the following operator in $\HH$
\begin{align}
& D_1(k,\lambda)  \label{defofD1}  \\
& :=   \frac{  1 }{(2\pi)^{3/2}}  \sum_{j=1}^N \left\{ 2  \frac{ \varepsilon(k,\lambda)  }{\sqrt{2 |k|} } \overline{\kappa(k)} e^{  i k \cdot x_j} \cdot  (p_j +  A(x_j)  ) +     \mu   S_j \cdot \frac{ i k \wedge   \varepsilon(k,\lambda) }{\sqrt{2 |k|} }  \overline{\kappa(k)} e^{  i k  \cdot x_j}\right\}   \nonumber
\end{align}
with domain  $\mathcal{D}((-\Delta + H_{\rm f})^{1/2})$.
Here we understand the right hand side of   \eqref{comm1}  as a Bochner integral, cf. \cite{yosida,evans10,AmannEscher.2009}.
Explicitly,  it follows from \eqref{defofD1} that for   all $\psi \in  \mathcal{D}((-\Delta + H_{\rm f})^{1/2})$  there exists a constant $C$
such that   for almost all $(k,\lambda) \in \R^3 \times \Z_2$
 \begin{align}\label{eq:fundbochest}
 & \|  h(k,\lambda) D_1(k,\lambda) \psi \| = |   h(k,\lambda)  |  \| D_1(k,\lambda) \psi \|  \\
 &  \leq C   | h(k,\lambda) |  \frac{ | \kappa(k)|}{\sqrt{ |k|}}  \sum_{j=1}^N    \left( \| (p_j + A(x) ) \psi \|  )  + |k| \| \psi \| \right)  .  \nonumber
 \end{align}
Since the right hand  side is bounded  in view of the elementary inequalities in Lemma \ref{lem:elemest}, we see from  \eqref{eq:assonfield} that \eqref{eq:fundbochest}
  is  an $L^1$--function  of $(k,\lambda)$, and so the right hand side of   \eqref{comm1} converges for all   $\psi \in \mathcal{D}((-\Delta + H_{\rm f})^{1/2})$  as a Bochner integral
 in $L^1(\R^3 \times \Z_2 ; \HH )$, cf. \cite[V.5. Theorem 1]{yosida} or  \cite[Appendix E.3. Theorem 8]{evans10}. Note that  weak measurability follows directly form the assumptions about $h$ and $\kappa$ and  hence  strong   measurability  is granted  by the separability of  $\HH$, cf.
 \cite[Theorem IV.22]{ReedSimon.1972}.
Taking  a second  commutator, we find for $h \in \hh$ and $\psi \in \mathcal{C}$ that
\begin{align}
     [  a(h) , D_1(k,\lambda) ]   \psi  & = \sum_{\lambda'=1,2} \int_{\R^3}   \overline{ h(k',\lambda')} D_2(k,\lambda,k' , \lambda') \psi  dk' \label{comm2} ,
\end{align}
where we  defined  for $k,k' \in \R^3 \setminus \{ 0 \}$ and $\lambda, \lambda'  \in \Z_2$ the following  bounded  operator in $\HH_{\rm at}$
\begin{align*}
D_2(k,\lambda,k' , \lambda')  & := \frac{ 2 }{(2\pi)^{3}}  \sum_{j=1}^N  e^{  i ( k - k') \cdot x_j}  \overline{  \kappa(k) } \kappa(k')   \frac{  \varepsilon(k,\lambda) \cdot \varepsilon(k',\lambda') }{\sqrt{2|k|} \sqrt{2|k'|}  }   .
\end{align*}
Again we understand   \eqref{comm2} as a Bochner integral in  $L^1(\R^3 \times \Z_2 ; \HH )$, that is for all $(k,\lambda) \in \R^3 \times \Z_2$ and  $\psi \in \HH$
the map   $$(k',\lambda') \mapsto  \overline{h(k',\lambda')} D_2(k,\lambda,k',\lambda') \psi$$ is an element of  $L^1(\R^3 \times \Z_2 ; \HH )$.

Next we formulate a   hypothesis about boundary values of the resolvent.

\begin{hyp} \label{hypresikvebt} Hypothesis \ref{hyp:exgs} holds and  $S \subset \R^3 \setminus \{ 0 \}$  
is  a set such that for all $\lambda, \lambda' \in \Z_2$ and $k, k' \in S$
the boundary value of the resolvent
\begin{align} \label{eq:boundaryvalueres}
& \inn{  D_{1}(k,\lambda) \psi_{\rm gs} , (H - E_{\rm gs} - \omega(k') - i 0_+ )^{-1}  D_{1} (k',\lambda') \psi_{\rm gs} }  \\ \nonumber
 & := \lim_{\eta \downarrow 0} \inn{  D_{1}(k,\lambda) \psi_{\rm gs} , (H - E_{\rm gs} - \omega(k') - i \eta )^{-1}  D_{1} (k',\lambda') \psi_{\rm gs} }
\end{align}
exists. For every compact $S_0 \subset S$ there  exists  an $\epsilon_0 > 0$ such that for all $k \in S$ and $\lambda, \lambda' \in \Z_2$
\begin{align} \label{eq:boundaryvalueres1}
\sup_{k' \in S_0 , \eta  \in (0,\epsilon_0) }|  \inn{  D_{1}(k,\lambda) \psi_{\rm gs} , (H - E_{\rm gs} - \omega(k') - i \eta )^{-1}  D_{1} (k',\lambda') \psi_{\rm gs} }  | < \infty .
\end{align}
\end{hyp}

If  there exits a  ground state,  $k,k' \in \R^3 \setminus \{ 0\}$, and   the limit   \eqref{eq:boundaryvalueres}  exists,  then we can define
\begin{align}
 T(k,\lambda,k',\lambda') & :=    - \inn{   (H + \omega(k')  - E_{\rm gs}   )^{-1} D_{1}(k',\lambda')^* \psi_{\rm gs}          ,  D_{1}(k,\lambda)^*  \psi_{\rm gs} }
\label{defofT}
 \\
&  -
  \inn{  D_{1}(k,\lambda)   \psi_{\rm gs} ,   ( H -  \omega(k') - E_{\rm gs}   - i 0_+  )^{-1} D_{1}(k',\lambda') \psi_{\rm gs} } \nonumber   \\
& +
  \inn{   D_{2}(k,\lambda,k',\lambda')  \psi_{\rm gs} , \psi_{\rm gs} }   ,  \nonumber
\end{align}
 which shall be  called   $T$-matrix. 
 Note that the $T$-matrix depends on the choice of the  ground state, in case where  the ground state energy is degenerate.
We need one more Hypothesis.

\begin{hyp} \label{hypresikvebt2} Hypothesis \ref{hypresikvebt} holds for   $S \subset \R^3 \setminus \{ 0 \}$.
The $T$-matrix \eqref{defofT} is
 bounded  for $k$, $k'$ in any compact subset of $S$.   As a function on $(S \times  \Z_2)^2$ the $T$-matrix depends
for each   $k \in S$   continuously on   $|k'|$ for $k' \in S$.  (c.f. Definition \eqref{rem:contabs}  below).
\end{hyp}

\begin{definition} \label{rem:contabs}   Let $S \subset \R^3$ and   $f : x \mapsto f(x)$ be a function
on $S$. We say that $f(x)$ depends continuously (differentiably) on  $|x|$ if for each $s \in \R^3$ with $|s|=1$
the map
 $r  \mapsto f(r s)$  is a continuous (differentiable) function on the set $\{ r \in [0,\infty ) : r s \in S \}$.

\end{definition}

 In  Section \ref{sec:veri} we  verify Hypothesis  \ref{hypresikvebt2} and  \ref{hypresikvebt}
for an atom with dilation analytic coupling by means of dilation analyticity \cite{BalslevCombes.1971,Simon.1973a}.  We only need that there  exists  a  ground state which
is dilation analytic and whose analytic extension decays exponentially. For atoms with spinless ``electrons''
this assumption has been verified \cite{HaslerLejsek.2022}, see also \cite{OConnor.1972,CombesThomas.1973,ReedSimon.1978,Sigal.2009,GriesemerHasler.2009,HaslerHerbst.2011a} for  related results.
Furthermore we use existence and regularity of analytically dilated  resolvents , which have
 been shown   for atoms in  \cite{JacsicPillet.1995,BachFroehlichSigal.1998b,BachFroehlichSigal.1999,HaslerHerbstHuber.2008,AbouSalemFaupinFrohlichSigal.2009}.

We believe that
alternatively  one could use Mourre's commutator method \cite{Mourre.1980} to verify Hypothesis  \ref{hypresikvebt2} and  \ref{hypresikvebt}.
This theory has been applied  to the standard model of non-relatvistic qed or related models by
various authors   \cite{BachFroehlichSigalSoffer.1999,GeorgescuGerardMoeller.2004,FroelichGriesemerSigal.2008,BonyFaupin.2012}, see also
\cite{BallesterosDeckertHaenle.2020} and references therein, where it has
been applied for a related purpose.
Specifically,  using the limiting absorption principles proven in  these references together
with the existence of an exponentially decaying ground state \cite{BachFroehlichSigal.1998b,GriesemerLiebLoss.2001,Griesemer.2004}.

\begin{theorem}  \label{main:qed} Suppose  Hypotheses
 \ref{thm:fgs01thm4} and  \ref{hypresikvebt2} hold for a set $S \subset \R^3 \setminus \{ 0 \}$.   Then  for  $f, h  \in C_c(S)^2$
we have
\begin{align}
 & \inn{ a^*_{\rm out}(f) \psi_{\rm gs} , a^*_{\rm in}(h) \psi_{\rm gs} }  - \inn{ f , h }  \label{eq:mainform} \\
& = - 2 \pi  i \sum_{\lambda, \lambda'=1,2} \int_{\R^3}  \int_{\R^3} \overline{ f(k , \lambda)} \delta(\omega(k) - \omega(k') )   h(k',\lambda')  T(k,\lambda,k', \lambda')  dk' dk . \nonumber
\end{align}

\end{theorem}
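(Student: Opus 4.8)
The plan is to combine Cook's method with an Abelian limit, converting the scattering matrix element into time integrals of the current operators $D_1,D_2$. Since the supports of $f,h$ are compact in $\R^3\setminus\{0\}$ we have $C_c(S)^2\subset\hh_\omega$, so Hypothesis \ref{thm:fgs01thm4} applies to $f$, $h$ and to $e^{-i\omega t}h$ for every $t$, while $\psi_{\rm gs}\in1_{H=E_{\rm gs}}\HH$ gives $a_{\rm in}(f)\psi_{\rm gs}=0$. Using the canonical commutation relations of Hypothesis \ref{thm:fgs01thm4}(ii) in form sense on $1_{H\le E}\HH$ one gets $\langle a^*_{\rm in}(f)\psi_{\rm gs},a^*_{\rm in}(h)\psi_{\rm gs}\rangle=\langle f,h\rangle$, so the left-hand side of \eqref{eq:mainform} equals $\big\langle(a^*_{\rm out}(f)-a^*_{\rm in}(f))\psi_{\rm gs},\,a^*_{\rm in}(h)\psi_{\rm gs}\big\rangle$. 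For $t\in\R$ put $A_t^\#(h):=e^{iHt}e^{-iH_ft}a^\#(h)e^{iH_ft}e^{-iHt}=e^{iHt}a^\#(e^{-i\omega t}h)e^{-iHt}$ by \eqref{defofasy2}; differentiating and using $[H_f,a^*(g)]=a^*(\omega g)$ together with \eqref{comm1} (the summand $V\otimes1$ commutes with $a^\#$) gives, on the dense subspace on which \eqref{comm1}--\eqref{comm2} hold,
\[
\frac{d}{dt}A_t^*(h)=i\,e^{iHt}\Big(\sum_\lambda\int dk\;e^{-i\omega(k)t}\,h(k,\lambda)\,D_1(k,\lambda)\Big)e^{-iHt}.
\]

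By Hypothesis \ref{thm:fgs01thm4}(i), $A_t^*(f)\psi_{\rm gs}$ has limits as $t\to\pm\infty$; integrating the identity above against $\psi_{\rm gs}$ and using $H\psi_{\rm gs}=E_{\rm gs}\psi_{\rm gs}$ gives the improperly convergent representation $(a^*_{\rm out}(f)-a^*_{\rm in}(f))\psi_{\rm gs}=i\int_\R dt\,e^{i(H-E_{\rm gs})t}\sum_\lambda\int dk\,e^{-i\omega(k)t}f(k,\lambda)D_1(k,\lambda)\psi_{\rm gs}$. As $\|D_1(k,\lambda)\psi_{\rm gs}\|$ is locally bounded by the elementary estimates of Lemma \ref{lem:elemest} and $f$ has compact support, the $e^{-\eta|t|}$-regularized integral is absolutely convergent and, by Abel's theorem, its limit as $\eta\downarrow0$ recovers the improper integral. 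Pairing with $a^*_{\rm in}(h)\psi_{\rm gs}$ and using the intertwining relation $e^{-i(H-E_{\rm gs})t}a^*_{\rm in}(h)\psi_{\rm gs}=a^*_{\rm in}(e^{-i\omega t}h)\psi_{\rm gs}$ (immediate from \eqref{defofasy}), the quantity to be evaluated becomes
\[
-i\lim_{\eta\downarrow0}\int_\R dt\;e^{-\eta|t|}\sum_\lambda\int dk\;e^{i\omega(k)t}\,\overline{f(k,\lambda)}\;\big\langle D_1(k,\lambda)\psi_{\rm gs},\;a^*_{\rm in}(e^{-i\omega t}h)\psi_{\rm gs}\big\rangle .
\]
I would then split $a^*_{\rm in}(e^{-i\omega t}h)\psi_{\rm gs}=a^*(e^{-i\omega t}h)\psi_{\rm gs}+(a^*_{\rm in}-a^*)(e^{-i\omega t}h)\psi_{\rm gs}$, call the two contributions (A) and (B), and apply Cook once more to (B): $(a^*_{\rm in}-a^*)(g)\psi_{\rm gs}=-i\int_{-\infty}^0 ds\,e^{i(H-E_{\rm gs})s}\sum_{\lambda'}\int dk'\,e^{-i\omega(k')s}g(k',\lambda')D_1(k',\lambda')\psi_{\rm gs}$, regularized with $e^{\eta's}$.

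In (A) I expand $a^*(e^{-i\omega t}h)$ into pointwise creation operators and use the distributional form of \eqref{comm2}, $\big\langle D_1(k,\lambda)\psi_{\rm gs},a^*(k',\lambda')\psi_{\rm gs}\big\rangle=\big\langle a(k',\lambda')\psi_{\rm gs},D_1(k,\lambda)^*\psi_{\rm gs}\big\rangle+\big\langle D_2(k,\lambda,k',\lambda')\psi_{\rm gs},\psi_{\rm gs}\big\rangle$, together with the pull-through formula $a(k',\lambda')\psi_{\rm gs}=-(H+\omega(k')-E_{\rm gs})^{-1}D_1(k',\lambda')^*\psi_{\rm gs}$, which follows by commuting $a(k',\lambda')$ through $H\psi_{\rm gs}=E_{\rm gs}\psi_{\rm gs}$ with the adjoint of \eqref{comm1} and observing $H+\omega(k')-E_{\rm gs}\ge\omega(k')>0$. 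The two summands are then precisely the first and third terms of $T(k,\lambda,k',\lambda')$ in \eqref{defofT}, while the leftover integral $\int_\R e^{-\eta|t|}e^{i(\omega(k)-\omega(k'))t}\,dt=2\eta/(\eta^2+(\omega(k)-\omega(k'))^2)$ converges to $2\pi\delta(\omega(k)-\omega(k'))$. In (B) the $s$-integral equals $(\eta'+i(H-E_{\rm gs})-i\omega(k'))^{-1}$, which by Hypothesis \ref{hypresikvebt} converges — after pairing with $D_1(k',\lambda')\psi_{\rm gs}$ — to $-i\,(H-\omega(k')-E_{\rm gs}-i0_+)^{-1}$ as $\eta'\downarrow0$; the remaining $t$-integral again gives $2\pi\delta(\omega(k)-\omega(k'))$, and this contribution is the second term of $T$. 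Collecting (A) and (B) and tracking the factors $-i$ gives exactly the right-hand side of \eqref{eq:mainform}.

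The delicate part is the passage to the limits rather than the algebra. One has to: (i) make the distributional commutator manipulations with the unbounded operators $D_1$, $D_2$, $a(k,\lambda)$ rigorous by pairing everything with the single regular vector $\psi_{\rm gs}$, using Lemma \ref{lem:elemest} and the regularity of $\psi_{\rm gs}$; (ii) perform the two Abelian limits in the correct order — the inner $\eta'$ first, which is exactly where existence and local boundedness of the boundary value of the resolvent (Hypothesis \ref{hypresikvebt}) are invoked, and only afterwards the outer $\eta$; and (iii) pass the outer limit through the $dk\,dk'$ integration, i.e. establish $\lim_{\eta\downarrow0}\int\!\!\int \frac{2\eta}{\eta^2+(\omega(k)-\omega(k'))^2}\,F(k,k')\,dk\,dk'=2\pi\int\!\!\int\delta(\omega(k)-\omega(k'))\,F(k,k')\,dk\,dk'$ for $F$ built from $f$, $h$ and $T$. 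Passing to polar coordinates reduces (iii) to a one-dimensional statement about the Poisson kernel; this is where continuity of the $T$-matrix in the radial variable together with its local boundedness (Hypothesis \ref{hypresikvebt2}), combined with dominated convergence and the compact supports of $f,h$, become indispensable.
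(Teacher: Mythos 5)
Your proposal is correct and follows essentially the same route as the paper: Cook's method plus the Abelian limit, the intertwining relation for $a_{\rm in}^*$, the commutator identities \eqref{comm1}--\eqref{comm2} with the pull-through formula to produce the three terms of $T$, and the Poisson-kernel limit controlled by the boundedness and radial continuity in Hypothesis \ref{hypresikvebt2}. The only difference is organizational: the paper isolates the identity $\inn{D_1(K)\psi_{\rm gs},a_{\rm in}^*(h)\psi_{\rm gs}}=\int dK'\,T(K,K')h(K')$ as a separate proposition for a time-independent test function and then substitutes $e^{-iu\omega}g$, whereas you unroll the same computation inline with the splitting into (A) and (B).
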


\begin{remark}{\rm  Let us comment on the notation
used in Theorem \ref{main:qed}.
\begin{itemize}
\item[(a)]  For a set $S \subset \R^d$ we shall denote by $C_c(S)$  the set of all continuous
functions $f : S \to \C$ with compact support. Without mention we shall use the canonical isomorphism $C_c(S)^2 \cong \{ f : S \times \Z_2\to \C :  f(\cdot , \lambda) \in C_c(S),\ \lambda=1,2 \}$.
 \item[(b)]

  For $t > 0$  the expression  $\delta(t - \cdot )  $
  is  defined  as a point measure on $(0,\infty)$ at $t$ with mass one.
 Explicitly,  for $f \in C_c(\R^3 \setminus \{ 0 \})$ and $t  > 0$ we define
\begin{align*}
\int_{\R^3}  \delta(t   - |x|) f(x) dx  & :=
\int_{S_2}  \int_0^\infty \delta( t - r )  f(r \omega) r^2 dr dS(\omega)
= \int_{S_2}     f(t  \omega) t^2   dS(\omega) ,
\end{align*}
where $dS$ denotes the  measure on $S_2 := \{ x \in \R^3 : |x| = 1 \}$.
The integral \eqref{eq:mainform} is understood as an iterated integral, where we first integrate with respect to $k'$ according to
the displayed equation just above 
and then with respect to $k$.
\end{itemize}}
\end{remark}

\begin{remark} {\rm
We note that a  relation similar to  \eqref{eq:mainform}   has been shown  in \cite{BallesterosDeckertHaenle.2020}
for the spin boson model. In contrast to the spin boson model  there is in non-relativistic qed
 an additional term in the $T$-matrix,
  which comes from the
 quadratic expressions of the field operators in the Hamiltonian.
}
\end{remark}

\begin{remark}  {\rm Relation  \eqref{eq:mainform}  can be used as a starting point for    calculations of the one photon
 scattering matrix.
 A   procedure
to control the scattering amplitude for arbitrary photon processes up to  remainder terms of arbitrarily high order in the coupling constant   has been carried out in \cite{BachFroehlichPizzo.2007},
using  a type of  iterated perturbation theory. In that paper a relation of the scattering amplitude to Bohr's frequency condition was established.
 In contrast to \cite{BachFroehlichPizzo.2007}  the abstract relation  \eqref{eq:mainform}   allows  to use expansions which have
already been established in the literature. In this regard
we note that  analytic  expansions  of the ground state
have been established in  \cite{GriesemerHasler.2009,HaslerHerbst.2011a,HaslerHerbst.2011b,HaslerLejsek.2022}.
Moreover, expansions of  boundary values of  resolvents have been
	obtained in the literature, see  \cite{BachFroehlichSigal.1998b,BachFroehlichSigal.1999,Sigal.2009,HaslerHerbstHuber.2008} and references therein.
	 }
\end{remark}

\vspace{0.2cm}
\section{Proofs}\label{sec:proofs}

 The basic idea of the proof is to integrate out the time evolution  using the spectral theorem together
with an Abelian limit. For this we shall use the following representation of the asymptotic creation  operator,
which is obtained   by means of  Cook's method.

\begin{lemma} \label{lem:cook}  Suppose  Hypothesis \ref{thm:fgs01thm4} holds for some $E > E_{\rm gs}$.
Then for $\psi = 1_{H \leq E} \psi$ and  $h \in L_\omega^2(\R^3 ; \C^2)$  
\begin{align} \label{eqintform}
a_{\substack{ {\rm out} \\ {\rm in }}}^*(h) \psi
 & = a^*(h) \psi  +  i \int_0^{\pm \infty} \sum_{\lambda=1,2} \int_{\R^3} h(k,\lambda) e^{ i u ( H -  \omega(k) )  } D_1(k,\lambda)  e^{ - i H u} \psi   dk   du ,
\end{align}
where the first   integral over $k$ converges as an integral in $L^1(\R^3 \times \Z_2 ; \HH)$ and the second integral with respect to  $u$ converges in the
sense of  Riemann integrals with respect to the norm  topology in $\HH$.
\end{lemma}

We note that the relation in Lemma \ref{lem:cook} is formally equivalent to the relations given in  \cite[(II.7),(II.8),(II.12),(II.17)]{BachFroehlichPizzo.2007}.
The following proof of Lemma \ref{lem:cook} is based on Hypothesis \ref{thm:fgs01thm4}.

\begin{proof} Let $\psi = 1_{H \leq E}\psi$ and $h \in \hh$.  The convergence of the $k$ integral  in \eqref{eqintform} follows from the following estimate analogous to \eqref{eq:fundbochest}.  Using $\| (H+ i ) e^{ - i H u } \psi \| \leq  ( \max\{ |E_{\rm gs}|,|E|\} + 1 ) \| \psi \| $,
and  \begin{align} \label{elemopbound} \|(\Delta + H_f + 1)^{1/2} ( H+ i )^{-1}     \|   < \infty \end{align} (which follows from \eqref{domprob} and abstract theory of closed
operators in Hilbertspaces \cite[Theorem 5.9]{Weidmann.1980})
we see that there  exists a constant $C$ such that for almost all $(k,\lambda) \in \R^3 \times \Z_2$
\begin{align} \label{estonintbochner}
&\left\| h(k,\lambda) e^{ i u ( H -  \omega(k) )  } D_1(k,\lambda)     e^{ - i H u} \psi  \right\| = \left\| h(k,\lambda) D_1(k,\lambda)     e^{ - i H u} \psi  \right\|  \\
 &  \leq C   | h(k,\lambda) |  \frac{ | \kappa(k)|}{\sqrt{ |k|}}  \sum_{j=1}^N    \left( \| (p_j + A(x) ) (-
\Delta + H_f + 1)^{-1/2} \|  + |k| \right) \| \psi \|   .  \nonumber
 \end{align}
Now  \eqref{estonintbochner}  is  by   \eqref{eq:assonfield} an $L^1$-function of $(k,\lambda)$.
It follows that the integral
\begin{equation} \label{eq:contboch}
\sum_{\lambda=1,2} \int_{\R^3} h(k,\lambda) \left[ e^{ i u ( H -  \omega(k) )  } D_1(k,\lambda)  e^{ - i H u} \right] \psi dk
\end{equation}
 exists  as a Bochner integral.

Next we show   \eqref{eqintform} using a type of  Cook's  argument. Thus let $h \in \hh_\omega$.
 We first claim  that
 the map
\begin{equation} \label{eq:derfun}\eta_h :  \R \to \HH , \quad
 u \mapsto  \eta_h(u) := \left[ e^{ i u H } a^*( e^{- i \omega u } h ) e^{ - i H u} \right] \psi
\end{equation}
is differentiable  with respect to the norm topology in $\HH$. For this,  observe  that $u \mapsto e^{ - i H u} \psi$  as well
as $u \mapsto   e^{ - i H u}  H \psi$ are
 continuously differentiable since $\psi \in  {\rm Ran} 1_{H \leq E}$, as can be seen   
 by  the spectral theorem for  self-adjoint operators.   First, we  assume in addition $\omega h \in \hh_\omega$ and show that
the map $\R \mapsto \HH$ given by $u \mapsto  \xi(u) := a^*( e^{- i \omega u } h ) e^{ - i H u} \psi$
is differentiable. Indeed, for $v \in \R \setminus \{ 0\}$
\begin{align} \label{derofxi}
&  \frac{1}{v} \left( \xi(u+v) - \xi(u)  \right)  \nonumber \\
&  =  \frac{1}{v}  \left( a^*( e^{- i \omega (u+v )} h )  -  a^*( e^{- i \omega u } h )\right)  (H+i)^{-1}  e^{ - i H u} (H+i )  \psi  \nonumber  \\
&  +
 a^*( e^{- i \omega (u+v )} h )  (H+i)^{-1}  \frac{1}{v}  \left(e^{ - i H ( u+v) } (H+i)  \psi -   e^{ - i H u} (H+i) \psi \right) \nonumber \\
& \rightarrow a^*(- i \omega  e^{- i \omega u } h ) e^{ - i H u} \psi + a^*( e^{- i \omega u } h ) e^{ - i H u} ( - i H) \psi  = \xi'(u)
\end{align}
where the convergence  follows using    \eqref{elemopbound} and Lemma \ref{lem:elemest} as well as the differentiability of
$u \mapsto  e^{ - i H u } (H + i ) \psi$, just discussed.
 Thus     $u \mapsto \xi(u)$  is an $\HH$--valued differentiable function. Since  moreover  $\xi(u) \in \mathcal{D}(H)$,  by  Lemma \ref{invdom},
  it  follows
 from 
\begin{align*}  
& v^{-1} ( e^{ i H (u + v ) } \xi(u+v) -   e^{ i H u  } \xi(u) ) =v^{-1} ( e^{ i H (u + v ) } -  e^{ i H u } ) \xi(u) + e^{ i H (u + v ) }   v^{-1} ( \xi(u + v) - \xi(u) )  \\
&  \to   e^{ i H u}   (i H \xi(u) + \xi'(u))    ,
\end{align*}
 that $e^{ i H u } \xi(u)$ is a differentiable function
with derivative $e^{ i H u } (i H \xi(u) + \xi'(u))$.
We now  conclude  using \eqref{derofxi}    that    the function in  \eqref{eq:derfun}
  is for $h \in \hh_\omega$  with $\omega h \in \hh_\omega$ differentiable with derivative
\begin{align}
&  \frac{d}{du} \eta_h(u) = \frac{d}{du}  \left[ e^{ i u H } a^*( e^{- i \omega u } h ) e^{ - i H u} \right] \psi  \label{eq:derfin-100}  \\
& =  e^{ i u H }\left(  [ i H  ,  a^*( e^{- i \omega u } h )] +  a^*(- i \omega e^{- i  \omega u} h) \right)  e^{ - i H u}  \psi  \nonumber  \\
& =   e^{ i u H }   \left[ i  \sum_{j=1}^N  \left\{  ( p_j  +  A(x_j))^2  + \mu S_j \cdot B(x_j) \right\}   ,  a^*( e^{- i \omega u } h ) \right]   e^{ - i H u}  \psi  \nonumber \\
& =   e^{ i u H } \frac{  i }{(2\pi)^{3/2}}    \sum_{j=1}^N   \Bigg\{ \sum_{\lambda=1,2} \int_{\R^3}
 2  \frac{ \varepsilon(k,\lambda)  }{\sqrt{2 |k|} } \overline{\kappa(k)} e^{  i k \cdot x_j}  h(k,\lambda) e^{- i \omega(k) u} dk \cdot  (p_j +  A(x_j)  ) \nonumber
\\
& + \sum_{\lambda=1,2}
 \int_{\R^3}  \mu   S_j \cdot  \frac{ i k \wedge   \varepsilon(k,\lambda) }{\sqrt{2 |k|} }  \overline{\kappa(k)} e^{  i k  \cdot x_j} h(k,\lambda) e^{- i \omega(k) u}  dk
\Bigg\}   e^{ - i H u}  \psi  ,  \label{eq:derfin-10}
\end{align}
 where in the third line   we used the basic relation  $[H_{\rm f}, a^*(f) ] \subseteq  a^*(\omega f)$ and
in the fourth line  we used the canonical commuation relations. Now observe  from the  the explicit  expression
that   \eqref{eq:derfin-10}  depends continuously on $u$ for all $h \in \hh_\omega$.
It now follows from a standard limiting argument, that   the function \eqref{eq:derfun} is  for
all $h \in \hh_\omega$ continuously differentiable in $u$
with derivative given by \eqref{eq:derfin-10}
 (for $h_n = 1_{|\cdot | \leq n} h$, we see from   \eqref{elemopbound}  that  $\eta_{h_n}(u) \to \eta_h(u)$ for each $u \in \R$
and the derivatives $\eta_{h_n}'$ converges uniformly on compact sets in view of \eqref{eq:derfin-10}).

Next we use the definition   of $D_1$ given in   \eqref{comm1} and the convergence of the Bochner-integral justified by  \eqref{estonintbochner}
to see that
\begin{align}
\text{  \eqref{eq:derfin-10} }= e^{ i u H }  \sum_{\lambda=1,2} \int_{\R^3}  h(k,\lambda)  e^{ -i u    \omega(k)   } D_1(k,\lambda)  e^{ - i H u} \psi  dk  , \label{eq:derfin0}
\end{align}
where equality  
 can be seen  by calculating for both sides  the inner product with elements of the
dense subset $\mathcal{C}$  (using  elementary properties of Bochner-integrals, cf.  \cite[V.5. Corollary 2]{yosida}  or \cite[Appendix E.3. Theorem 8]{evans10})  
and interchanging integrals with respect to the integration variables $k$ and $x=(x_1,...,x_N)$
by means of Fubini (in the  l.h.s. one calculates first the $k$-integral and then the $x$-integral and in  the r.h.s. in the reversed order).
 By continuity of  $e^{ i u H }  $  
it follows again
from the convergence  of the Bochner integral, established by means of \eqref{estonintbochner} (using elementary properties of Bochner-integrals %
\cite[V.5. Corollary 2]{yosida} or \cite[Theorem 2.11 (iii)]{AmannEscher.2009})
that for all $h \in  \hh_\omega$
\begin{align} \label{eq:derfin1}
 & e^{ i u H }  \sum_{\lambda=1,2} \int_{\R^3}  h(k,\lambda)  e^{ -i u    \omega(k)   } D_1(k,\lambda)  e^{ - i H u} \psi  dk \nonumber  \\
&=    \sum_{\lambda=1,2} \int_{\R^3}     h(k,\lambda)  e^{ i u  (H -     \omega(k) )   } D_1(k,\lambda)  e^{ - i H u} \psi dk  .
 \end{align}

Finally,
to show \eqref{eqintform} we will  use the definition of the  asymptotic creation and annihilation operators  \eqref{defofasy}
and  \eqref{defofasy2}.
Thus
\begin{align}
 a_{\substack{ {\rm out} \\ {\rm in }}}^*(h) \psi
& = \lim_{t \to \pm \infty} e^{ i H t} a^*( e^{- i \omega t } h)    e^{- i H  t } \psi  \nonumber \\
 & = a^*(h) \psi  +   \int_0^{\pm \infty}  \frac{d}{du}\left[ e^{ i u H } a^*( e^{- i \omega u } h ) e^{ - i H u} \right] \psi du
\nonumber
\\
 & = a^*(h) \psi  + i  \int_0^{\pm \infty}   \sum_{\lambda=1,2} \int_{\R^3}  h(k,\lambda)  e^{ i u ( H -  \omega(k) )  } D_1(k,\lambda)  e^{ - i H u} \psi    dk   du , \label{eq:IIIBNNN}
\end{align}
where for the first identity we used   the existence of the asymptotic creation operators, i.e.,  Hypothesis \ref{thm:fgs01thm4} (i), for the second
identity we used   the fundamental
theorem of calculus  and the continuity of the derivative  in $u$ (i.e. \eqref{eq:derfin-10}),
and in the last identity we used  \eqref{eq:derfin0}  and \eqref{eq:derfin1}. Thus \eqref{eq:IIIBNNN} shows   the desired identity.
We note that instead of using Hypothesis \ref{thm:fgs01thm4} (i)  one could use
a stationary phase estimate  and a density argument to show the integrability at infinity with resect to  $u$  directly, cf.  \cite[Proposition 3, Theorem 4]{FrohlichGriesemerSchlein.2001}
or \cite[Lemma II.1]{BachFroehlichPizzo.2007}.
 \end{proof}

\begin{proposition}  \label{eq:propTmat}  Suppose Hypothesis \ref{thm:fgs01thm4} holds for some $E > E_{\rm gs}$ and that Hypothesis  \ref{hypresikvebt} holds for some $S \subset \R^3 \setminus \{ 0 \}$. Then
for all $k \in S$, $\lambda \in \Z_2 $, and  $h \in C_c(S)^2$
\begin{align*}
&  \inn{      D_{1}(k,\lambda)\psi_{\rm gs}     ,  a_{\rm in}^*(h)  \psi_{\rm gs} }  =  \sum_{\lambda'=1,2}  \int_{\R^3} T(k,\lambda,k',\lambda')  h(k',\lambda') dk'
\end{align*}

\end{proposition}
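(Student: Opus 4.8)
The plan is to start from the Cook-method representation in Lemma~\ref{lem:cook} applied to $a_{\rm in}^*(h)\psi_{\rm gs}$, pair it against $D_1(k,\lambda)\psi_{\rm gs}$, and evaluate the resulting time integral over $u \in (-\infty,0]$ by means of an Abelian limit. Concretely, writing $\psi_{\rm gs} = 1_{H \le E}\psi_{\rm gs}$ (legitimate since $E > E_{\rm gs}$), Lemma~\ref{lem:cook} gives
\begin{align*}
\inn{ D_1(k,\lambda)\psi_{\rm gs}, a_{\rm in}^*(h)\psi_{\rm gs} }
&= \inn{ D_1(k,\lambda)\psi_{\rm gs}, a^*(h)\psi_{\rm gs} } \\
&\quad + i \int_{-\infty}^0 du \sum_{\lambda'} \int dk'\, h(k',\lambda')\, \inn{ D_1(k,\lambda)\psi_{\rm gs}, e^{iu(H-\omega(k'))} D_1(k',\lambda') e^{-iHu}\psi_{\rm gs} }.
\end{align*}
Since $H\psi_{\rm gs} = E_{\rm gs}\psi_{\rm gs}$, the factor $e^{-iHu}\psi_{\rm gs} = e^{-iE_{\rm gs}u}\psi_{\rm gs}$ comes out as a scalar, so the inner integrand is $e^{iu(E_{\rm gs}-\omega(k'))}\, e^{-iE_{\rm gs}u}\cdots$ — wait, more carefully, $e^{iuH}$ acting to the left on $D_1(k,\lambda)\psi_{\rm gs}$ does \emph{not} simplify, so the matrix element is $\inn{ D_1(k,\lambda)\psi_{\rm gs}, e^{iuH} D_1(k',\lambda')\psi_{\rm gs} } e^{-iu(\omega(k')+E_{\rm gs})}$. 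I would then insert the Abelian regulator $e^{\eta u}$ (with $\eta \downarrow 0$, and $u<0$ so this decays) and use the spectral theorem for $H$ to write $\int_{-\infty}^0 e^{iuH} e^{-iu\omega(k')}e^{-iuE_{\rm gs}}e^{\eta u}\,du = -i\,(H - \omega(k') - E_{\rm gs} - i\eta)^{-1}$ in the relevant quadratic form. This produces precisely the middle term $-\inn{ D_1(k,\lambda)\psi_{\rm gs}, (H-\omega(k')-E_{\rm gs}-i0_+)^{-1} D_1(k',\lambda')\psi_{\rm gs} }$ of \eqref{defofT} (the leading $i$ from Lemma~\ref{lem:cook} cancels the $-i$ from the time integral, leaving the correct sign after tracking the overall minus in $T$).

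Next I must identify the remaining two terms of $T$. The term $\inn{ D_1(k,\lambda)\psi_{\rm gs}, a^*(h)\psi_{\rm gs} }$ is handled by moving $a^*(h)$ to the left: $\inn{ D_1(k,\lambda)\psi_{\rm gs}, a^*(h)\psi_{\rm gs} } = \inn{ a(h)D_1(k,\lambda)\psi_{\rm gs}, \psi_{\rm gs} }$. Using the commutator identity \eqref{comm2}, $a(h)D_1(k,\lambda) = D_1(k,\lambda)a(h) + \sum_{\lambda'}\int dk'\,\overline{h(k',\lambda')}D_2(k,\lambda,k',\lambda')$ — but this has the wrong conjugation pattern; since $h \in C_c(S)^2$ and we want $h(k',\lambda')$ (not its conjugate), I should instead write out $a(h)D_1(k,\lambda)\psi_{\rm gs}$ by first noting $a(h)\psi_{\rm gs}$ need not vanish ($a_{\rm in}(h)\psi_{\rm gs}=0$ by Hypothesis~\ref{thm:fgs01thm4}(ii), but $a(h)$ is not asymptotic), so I actually need to be careful and instead push $a^*(h)$ through $D_1(k,\lambda)^*$ after rewriting $D_1(k,\lambda)\psi_{\rm gs}$ in terms of $D_1(k,\lambda)^*$; the first term of $T$, namely $-\inn{ (H+\omega(k')-E_{\rm gs})^{-1}D_1(k',\lambda')^*\psi_{\rm gs}, D_1(k,\lambda)^*\psi_{\rm gs} }$, has a resolvent at a spectral parameter in the resolvent set, which strongly suggests it arises from the \emph{other} endpoint contribution or from the $a^*(h)$ boundary term after a resolvent expansion $a^*(h)$ vs.\ $a^*(e^{-i\omega u}h)$. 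I expect that a cleaner route is: use the CCR to write $a^*(h) = $ (creation part) and exploit $a(g)\psi_{\rm gs}$'s relation to $a_{\rm in}$; alternatively apply Lemma~\ref{lem:cook} in the $a_{\rm out}$/$t\to+\infty$ direction to a companion quantity and combine. The $D_2$ term $\inn{ D_2(k,\lambda,k',\lambda')\psi_{\rm gs},\psi_{\rm gs} }$ should drop out of the boundary term $\inn{D_1\psi_{\rm gs},a^*(h)\psi_{\rm gs}}$ directly via \eqref{comm2}, once the conjugation bookkeeping is sorted — note $D_2$ is self-adjoint-ish in its structure so $\inn{D_2\psi_{\rm gs},\psi_{\rm gs}}$ is real and the conjugate pattern issue is cosmetic.

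The main obstacle I anticipate is twofold. First, the \textbf{interchange of limits and integrals}: the $u$-integral in Lemma~\ref{lem:cook} is only conditionally convergent, so inserting $e^{\eta u}$ and taking $\eta\downarrow 0$ requires justification — this is exactly where Hypothesis~\ref{hypresikvebt} enters, supplying existence of the boundary value \eqref{eq:boundaryvalueres} and the uniform bound \eqref{eq:boundaryvalueres1}, which together with dominated convergence (in the $k'$-integral against $h \in C_c(S)^2$) should close the argument; one also needs that $D_1(k',\lambda')\psi_{\rm gs}$ depends measurably/continuously enough in $k'$ on the support of $h$, which follows from the explicit form of $D_1$ and the elementary estimates of Lemma~\ref{lem:elemest}. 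Second, the \textbf{sign and conjugation accounting} linking the three pieces of \eqref{defofT} to the three natural contributions (the $a^*(h)$ boundary term, the $(-i0_+)$-resolvent from the Abelian limit, and the $D_2$ commutator term): getting the spectral parameter $+\omega(k')$ versus $-\omega(k')-i0_+$ right in the two resolvents, and confirming the first resolvent indeed lands in the resolvent set (so no boundary value is needed there — consistent with Hypothesis~\ref{hypresikvebt2}'s remark), is the delicate computational heart. I would organize the proof so that the Abelian limit is performed once, cleanly, as a lemma-style computation $\int_{-\infty}^0 e^{iu\lambda}e^{\eta u}\,du = i/(\lambda+i\eta)$ pushed through the spectral measure, and then the remaining manipulations are purely algebraic applications of \eqref{comm1}, \eqref{comm2}, and the CCR on $1_{H\le E}\HH$.
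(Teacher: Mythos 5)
Your overall strategy matches the paper's: apply Lemma~\ref{lem:cook} to $a_{\rm in}^*(h)\psi_{\rm gs}$, use $H\psi_{\rm gs}=E_{\rm gs}\psi_{\rm gs}$ to pull out the phase, evaluate the time integral by an Abelian limit (Proposition~\ref{lem:limintident}) plus the spectral-theorem identity $\int_0^\infty e^{-it(A-i\epsilon)}dt = -i(A-i\epsilon)^{-1}$, and invoke \eqref{eq:boundaryvalueres1} with dominated convergence to pass to the boundary value. That correctly produces the middle term of \eqref{defofT}. Your handling of the boundary term $\inn{D_1(K)\psi_{\rm gs}, a^*(h)\psi_{\rm gs}}$ via the commutator \eqref{comm2} is also essentially right, and the conjugation worry you raise is cosmetic: writing $\inn{D_1(K)\psi_{\rm gs},a^*(h)\psi_{\rm gs}}=\inn{[a(h),D_1(K)]\psi_{\rm gs},\psi_{\rm gs}}+\inn{a(h)\psi_{\rm gs},D_1(K)^*\psi_{\rm gs}}$ places the commutator in the antilinear slot, so the $\overline{h(k',\lambda')}$ in \eqref{comm2} conjugates back to $h(k',\lambda')$, yielding exactly $\int dK'\,h(K')\inn{D_2(K,K')\psi_{\rm gs},\psi_{\rm gs}}$.

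The genuine gap is the first term of \eqref{defofT}. You correctly observe that $a(h)\psi_{\rm gs}$ need not vanish and that the resolvent $(H+\omega(k')-E_{\rm gs})^{-1}$ sits at a point of the resolvent set, but you then leave the derivation open, speculating about ``the other endpoint contribution'' or ``a resolvent expansion'' without committing to a mechanism. The paper closes this with the pull-through resolvent identity
$a(k',\lambda')\psi_{\rm gs} = -(H+\omega(k')-E_{\rm gs})^{-1}D_1(k',\lambda')^*\psi_{\rm gs}$
(valid for a.e.\ $k'\neq 0$, cf.\ \eqref{eq:pull}, due to Fr\"ohlich), which when substituted into the leftover term $\inn{a(h)\psi_{\rm gs},D_1(K)^*\psi_{\rm gs}}$ produces precisely $-\int dK'\,h(K')\inn{(H+\omega(k')-E_{\rm gs})^{-1}D_1(K')^*\psi_{\rm gs},D_1(K)^*\psi_{\rm gs}}$. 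Your instinct to ``exploit $a(g)\psi_{\rm gs}$'s relation to $a_{\rm in}$'' points in the right direction (the pull-through formula can indeed be derived from $a_{\rm in}(g)\psi_{\rm gs}=0$ by a Cook-type argument), but as written the proposal does not identify or prove this identity, so one of the three terms of the $T$-matrix is not accounted for.
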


\begin{proof} To simplify notation we write $K=(k,\lambda)$,   $K' = (k',\lambda')$, and
\begin{align} \label{simplenotation}
    \quad  \int  (\cdots )  dK  = \sum_{\lambda=1,2} \int_{\R^3} (\cdots ) dk .
\end{align}
To calculate  the right  argument of the inner product
we use  Lemma \ref{lem:cook}  and  find  after the  substitution $u \mapsto - u$
\begin{align}
a_{\rm in}^*(h) \psi_{\rm gs}  =  a^*(h)  \psi_{\rm gs}  - i  \int_0^{\infty}    \int  h(K') e^{ -i u ( H -  \omega(k') - E_{\rm gs} )  }  D_1(K')    \psi_{\rm gs} dK'  du \label{step1} .
\end{align}
Next we want to evaluate  the integral in    \eqref{step1} with respect to $u$. For this,  we first observe that
for a self-adjoint operator $A$, we have by the spectral theorem the following identity
$$
\int_0^T e^{- i t (A-i\epsilon)} dt = \frac{1}{-i(A-i\epsilon)} \left( e^{- i T (A - i \epsilon)} - 1 \right) ,   \quad (\epsilon > 0)
$$
where the integral is understood with respect to the strong operator topology as a Riemann-integral.
Taking the limit $T \to \infty$, again in the strong operator topology, gives
\begin{equation} \label{eq:intreppffrac}
\int_0^\infty  e^{- i t (A-i\epsilon)} dt = \frac{-i}{A-i\epsilon}   .
\end{equation}
Let us  apply   $\langle D_1(K)  \psi_{\rm gs} , \cdot \rangle $ to the integral in   \eqref{step1}.
Using that  strong convergence implies weak convergence and  elementary properties of Bochner-integrals 
 \cite[V.5. Corollary 2]{yosida} 
 (see also \cite[Appendix E.3. Theorem 8]{evans10})
we  obtain the following integral. We rewrite this integral
using an Abelian limit, see   Proposition   \ref{lem:limintident} (where the required  existence of the limit follows from the existence
of the limit of the improper integral in \eqref{step1}), we  then  use  \eqref{eq:intreppffrac} to  evaluate  the integral in    \eqref{step1} with respect to $u$.
This gives for  all $k \neq 0$  
\begin{align}
& \inn{    D_1(K)  \psi_{\rm gs} , \int_0^\infty \left( \int  h(K')  , e^{- i u ( H -  \omega(k') - E_{\rm gs} )  }D_1(K') \psi_{\rm gs}  dK' \right)  du }  \nonumber \\
& = \int_0^\infty \left( \int  h(K') \inn{    D_1(K)  \psi_{\rm gs} , e^{- i u ( H -  \omega(k') - E_{\rm gs} )  }D_1(K') \psi_{\rm gs} } dK' \right)  du \nonumber \\
& = \lim_{t \to \infty} \int_0^t \left( \int  h(K') \inn{    D_1(K )  \psi_{\rm gs} , e^{ - i u ( H -  \omega(k') - E_{\rm gs} )  }D_1(K') \psi_{\rm gs} } dK' \right)  du \nonumber  \\
& =  \lim_{\epsilon \downarrow 0 }   \int_0^\infty   e^{- \epsilon u} \left(  \int  h(K')  \inn{    D_1(K )  \psi_{\rm gs} , e^{  - i u ( H -  \omega(k') - E_{\rm gs} )  }D_1(K') \psi_{\rm gs} } dK' \right) du  \nonumber  \\
& =  \lim_{\epsilon \downarrow 0 }   \int_0^\infty   \left(  \int  h(K')  \inn{    D_1(K )  \psi_{\rm gs} , e^{ - i u ( H -  \omega(k') - E_{\rm gs} - i \epsilon  )  }D_1(K') \psi_{\rm gs} } dK' \right) du  \nonumber  \\
& = - i  \lim_{\epsilon \downarrow 0 } \int h(K')   \inn{    D_1(K )  \psi_{\rm gs} , ( H -  \omega(k') - E_{\rm gs} - i \epsilon )^{-1} D_1(K') \psi_{\rm gs} } dK' \nonumber   \\
&= - i  \int  h(K')  \inn{    D_1(K )  \psi_{\rm gs} , ( H -  \omega(k') - E_{\rm gs} - i  0_+  )^{-1} D_1(K') \psi_{\rm gs} } dK'  , \label{step2}
\end{align}
where we first used  Fubini  in the second to last line,
which is justified since the integrand satisfies the bound
\begin{align*}
& \left| h(K')  \inn{    D_1(K )  \psi_{\rm gs} , e^{ - i u ( H -  \omega(k') - E_{\rm gs} - i \epsilon  )  }D_1(K') \psi_{\rm gs} } \right| \\
& \leq |h(K')|  \| D_1(K )  \psi_{\rm gs} \| \| D_1(K') \psi_{\rm gs} \| e^{ - \epsilon u}
\end{align*}
so it is integrable with respect to  $K'$ (by   \eqref{eq:assonfield} and \eqref{eq:fundbochest}) as well as trivially with respect to   $u$,
and  we then used  \eqref{eq:intreppffrac}  in the second to last line.  In the last line  of \eqref{step2} we used dominated convergence, which
is  justified by  \eqref{eq:boundaryvalueres1}  of    Hypothesis \ref{hypresikvebt}.

Next, we shall also use  a pull-through resolvent identity, which states that  for almost all $k' \neq 0$
\begin{equation} \label{eq:pull}
a(k',\lambda') \psi_{\rm gs}  = -(H + \omega(k')  - E_{\rm gs} )^{-1} D_1 (k',\lambda')^* \psi_{\rm gs}  ,
\end{equation}
for a proof  see  for example \cite{Frohlich.1973}. 
Using identity  \eqref{step1}, inserting  \eqref{step2},
and calculating an elementary commutator, we find
\begin{align}
&   \inn{      D_1(K)\psi_{\rm gs}     ,  a_{\rm in}^*(h)  \psi_{\rm gs} } \nonumber  \\
  &  =   \inn{ D_1(K)   \psi_{\rm gs}          ,  a^*(h)   \psi_{\rm gs} } \nonumber  \\
&  \quad -  i
 \int_0^{\infty}  \int h(K')   \inn{    D_1(K)  \psi_{\rm gs} , e^{ - i u ( H -  \omega(k') - E_{\rm gs} )  }D_1(K') \psi_{\rm gs} } dK'  du  \nonumber  \\
  &  =  \inn{ [  a(h) , D_1(K) ]   \psi_{\rm gs}          ,    \psi_{\rm gs} }  + \inn{ a(h)  \psi_{\rm gs}          ,   D_1(K)^* \psi_{\rm gs} } \nonumber  \\
& \quad -  \int  h(K')  \inn{   D_1(K)   \psi_{\rm gs} , ( H -  \omega(k') - E_{\rm gs}  -  i 0_+  )^{-1}  D_1(K') \psi_{\rm gs} }   dK' \nonumber  \\
& =    \int  h(K')\bigg(  \inn{   D_2(K,K')  \psi_{\rm gs} , \psi_{\rm gs} } \nonumber \\
&  \quad   - \inn{   (H + \omega(k')  - E_{\rm gs}  )^{-1} D_1(K')^* \psi_{\rm gs}          ,  D_1(K)^*  \psi_{\rm gs} }
  \nonumber \\ & \quad  -
  \inn{   D_1(K)   \psi_{\rm gs} , ( H -  \omega(k') - E_{\rm gs}  -  i 0_+  )^{-1}  D_1(K') \psi_{\rm gs} }  \bigg) dK' \nonumber  \\
 & =   \int   h(K')  T(K,K') dK'  , \label{eq:proofprop}
\end{align}
where in the second to  last line we used  the definition of $D_2$ given in \eqref{comm2} and \eqref{eq:pull}.
In the
 last line we used  the definition of the $T$-matrix,  \eqref{defofT}.
\end{proof}

\begin{lemma} \label{corscat1}  Suppose Hypothesis \ref{thm:fgs01thm4}  holds for  $E > E_{\rm gs}$. Then  for  $\varphi = 1_{H \leq E} \varphi$ and  $f \in L_\omega^2(\R^3;\C^2)$
we have for all $t \in \R$
$$
e^{ i H t } a_{\rm in}^*(f)  \varphi  =  a_{\rm in}^*(e^{ i \omega t} f )  e^{ i H t} \varphi  .
$$
\end{lemma}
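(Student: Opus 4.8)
The plan is to unfold the definition \eqref{defofasy} of $a_{\rm in}^*$, insert $e^{iHt}$, and then perform a shift of the limiting time variable so that $e^{iHt}$ is absorbed both into the state and, via \eqref{defofasy2}, into the form factor. First, since $e^{iHt}$ is bounded it may be pulled inside the limit, and since the limit on the right exists by Hypothesis \ref{thm:fgs01thm4}\eqref{(i)}, one gets, for $\varphi = 1_{H\le E}\varphi$,
\[
e^{iHt}a_{\rm in}^*(f)\varphi = \lim_{s\to -\infty} e^{iH(t+s)}\,e^{-iH_f s}a^*(f)e^{iH_f s}\,e^{-iHs}\varphi .
\]
Next I would replace $s$ by $s-t$ in the limit, which leaves $\lim_{s\to-\infty}$ and its value unchanged, obtaining
\[
e^{iHt}a_{\rm in}^*(f)\varphi = \lim_{s\to -\infty} e^{iHs}\,e^{-iH_f(s-t)}a^*(f)e^{iH_f(s-t)}\,e^{-iH(s-t)}\varphi .
\]

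Then I would use the group property $e^{-iH_f(s-t)} = e^{-iH_f s}e^{iH_f t}$, $e^{iH_f(s-t)} = e^{-iH_f t}e^{iH_f s}$, together with \eqref{defofasy2} applied with $t$ replaced by $-t$, namely $e^{iH_f t}a^\#(f)e^{-iH_f t} = a^\#(e^{i\omega t}f)$, to rewrite the middle block as $e^{-iH_f s}\,a^*(e^{i\omega t}f)\,e^{iH_f s}$. Since also $e^{-iH(s-t)}\varphi = e^{-iHs}e^{iHt}\varphi$, the right-hand side becomes
\[
\lim_{s\to -\infty} e^{iHs}\,e^{-iH_f s}\,a^*(e^{i\omega t}f)\,e^{iH_f s}\,e^{-iHs}\,\bigl(e^{iHt}\varphi\bigr) .
\]

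Finally I would note that $e^{iHt}\varphi = 1_{H\le E}(e^{iHt}\varphi)$ because $e^{iHt}$ commutes with $1_{H\le E}$, and that $e^{i\omega t}f \in \hh_\omega$ since $|e^{i\omega t}f| = |f|$ pointwise gives $\|e^{i\omega t}f\|_\omega = \|f\|_\omega < \infty$. Hence Hypothesis \ref{thm:fgs01thm4}\eqref{(i)}, applied to the vector $e^{iHt}\varphi$ and the form factor $e^{i\omega t}f$, guarantees that the last limit exists and, by the very definition \eqref{defofasy}, equals $a_{\rm in}^*(e^{i\omega t}f)\,e^{iHt}\varphi$, which is the asserted identity. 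I do not expect a genuine obstacle here: the argument is essentially bookkeeping, the only points meriting a word of justification being that a bounded operator passes through the strong limit and that $\lim_{s\to-\infty}$ is invariant under the shift $s\mapsto s-t$.
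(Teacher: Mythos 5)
Your proof is correct and is essentially the same argument as the paper's: unfold the definition \eqref{defofasy}, shift the limiting time variable by $t$, and use the group property together with \eqref{defofasy2} to absorb $e^{iHt}$ into the form factor and the state. The paper merely runs the computation in the opposite direction (starting from $e^{-iHt}a_{\rm in}^*(e^{i\omega t}f)e^{iHt}\psi$ and reducing it to $a_{\rm in}^*(f)\psi$), which is an immaterial difference.
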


\begin{proof}
For $\psi \in 1_{H \leq E} \HH$ we find using  \eqref{defofasy}, \eqref{defofasy2}, and  Hypothesis \ref{thm:fgs01thm4} (i)
\begin{align*}
 e^{- i H t} a_{\rm in}^*(e^{ i \omega t } f) e^{ i H t } \psi
 & =e^{- i H t}  \lim_{u \to -\infty} e^{ i H u}  e^{- i H_{\rm f} u }  a^*(e^{ i \omega t  } f)  e^{ i H_{\rm f}  u} e^{- i H  u } e^{ i H t } \psi    \\
  & = \lim_{u \to -\infty}  e^{ i H (u-t) }  e^{- i H_{\rm f} (u-t) }  a^*( f)  e^{ i H_{\rm f} ( u-t)} e^{- i H  (u-t) } \psi    \\
& =   a_{\rm in}^*( f) \psi .
\end{align*}
Multiplying both sides with $e^{ i H t}$ yields the claimed identity.
\end{proof}

\begin{proof}[Proof of Theorem  \ref{main:qed} ]
From Lemma \ref{lem:cook} we see that for   the ground state  $\psi_{\rm gs}$  of $H$ with ground state energy $E_{\rm gs}$ we find
\begin{align} \label{eq:scatrel2}
a_{\rm out}^*(f)  \psi_{\rm gs} - a_{\rm in}^*(f) \psi_{\rm gs}   = i \int_{-\infty}^\infty   \sum_{\lambda=1,2} \int_{\R^3} f(k,\lambda)   e^{ i u (H -   \omega(k) - E_{\rm gs}) } D_1(k,\lambda)  \psi_{\rm gs} dk du .
\end{align}
We shall use notation \eqref{simplenotation}.
Using Hypothesis \ref{thm:fgs01thm4} (ii),  \eqref{eq:scatrel2} together with the continuity if the inner product,  Proposition \ref{lem:limintident},  and  elementary properties  of Bochner-integrals,   \cite[V.5. Corollary 2]{yosida},  
we find
\begin{align*}
& L := \inn{ a_{\rm out}^*(f) \psi_{\rm gs} , a_{\rm in}^*(g) \psi_{\rm gs} } - \inn{ f , g } \\
&  = \left\langle (  a_{\rm out}^*(f)  - a_{\rm in}^*(f) ) \psi_{\rm gs} , a_{\rm in}^*(g) \psi_{\rm gs} \right\rangle +  \inn{  a_{\rm in}^*(f)  \psi_{\rm gs} , a_{\rm in}^*(g) \psi_{\rm gs} } - \inn{ f , g }  \\
& = -i \lim_{t \to \infty} \int_{-t}^t \inn{  \int  f(K)   e^{ i u (H - \omega(k) - E_{\rm gs}) }  D_1(K) \psi_{\rm gs}   dK  ,    a_{\rm in}^*(g) \psi_{\rm gs} }du  \\
& = -i \lim_{\epsilon \downarrow 0 } \int_{-\infty}^\infty  e^{-\epsilon |u|}  \left\langle  \int f(K)   e^{ i u (H - \omega(k) - E_{\rm gs}) }  D_1(K) \psi_{\rm gs}   dK   ,    a_{\rm in}^*(g) \psi_{\rm gs} \right\rangle du \\
& = - i  \lim_{\epsilon \downarrow 0 } \int_{-\infty}^\infty e^{-\epsilon |u|}  \int \overline{f(K)}   \inn{  D_1(K) \psi_{\rm gs}     ,   e^{ - i u (H - \omega(k) - E_{\rm gs}) }  a_{\rm in}^*(g) \psi_{\rm gs} } dK du  .
\end{align*}
Now using  first Lemma \ref{corscat1} and then Proposition  \ref{eq:propTmat},     we  obtain
\begin{align*}
i  L  & =   \lim_{\epsilon \downarrow 0 } \int_{-\infty}^\infty  e^{-\epsilon |u|}  \int \overline{f(K)}   \inn{  D_1(K) \psi_{\rm gs}     ,   e^{  i u \omega(k)  }  a_{\rm in}^*(e^{- i u \omega} g) e^{ - i u (H  - E_{\rm gs}) } \psi_{\rm gs} } dK du \\
& =  \lim_{\epsilon \downarrow 0 } \int_{-\infty}^\infty  e^{-\epsilon |u|} \int \overline{ f(K)} e^{  i u \omega(k)  }   \inn{  D_1(K) \psi_{\rm gs}     ,    a_{\rm in}^*(e^{- i u \omega} g)   \psi_{\rm gs} }dK du \\
& =  \lim_{\epsilon \downarrow 0 } \int_{-\infty}^\infty  e^{-\epsilon |u|}  \int   \int \overline{f(K)}  g(K')   e^{ i u (\omega(k)  - \omega(k') ) } T(K,K')  dK' dK du \\
& =  \lim_{\epsilon \downarrow 0 }  \int  \int \overline{f(K)}  g(K') \int_{-\infty}^\infty  e^{-\epsilon |u|}   e^{ i u (\omega(k)  - \omega(k') ) } T(K,K') du dK' dK  \\
& =  \lim_{\epsilon \downarrow 0 }  \int \int \overline{f(K)}  g(K') \frac{2 \epsilon}{\epsilon^2 + (\omega(k)  - \omega(k') )^2} T(K,K') dK'  dK \\
& =  2 \pi \int \int \overline{ f(K)} g(K')   \delta(  \omega(k')  - \omega(k) ) T(K,K') dK' dK  ,
\end{align*}
where in the fourth   line we used Fubini 
 (which is justified by  the boundedness assumptions in Hypotheses \ref{hypresikvebt} and \ref{hypresikvebt2}). In the last equality we integrated by Hypothesis \ref{hypresikvebt2} first over
 $|k'|$  and then use  dominated convergence for the  $k$-integration  to take the limit $\epsilon \downarrow 0$ into the integal.
\end{proof}

\section{Verification  of the Hypothesis} 

\label{sec:veri}

In this section we show, in Theorem  \ref{propverihypD},  that  Hypothesis \ref{hypresikvebt2} (and thus Hypothesis \ref{hypresikvebt})     holds for an atom
for small  but nontrivial coupling in an energy interval,  for which   a Fermi's golden rule condition holds.
We make the following  additional assumptions on the potential $V$, which are assumed to hold throughout this section.
Let
$$
V(x_1,....,x_N) = V_C(x_1,...,x_N) :=  -  \sum_{j=1}^N  \frac{Z \alpha  }{|x_j|}  +  \sum_{i < j }^N  \frac{\alpha}{|x_j - x_i|}  ,
$$
with $Z=N$ and $\alpha > 0 $.  Let  $\HH_{\rm at} =L^2_a((\R^3 \times \Z_{2s+1} )^N) $ be given as  the  antisymmetric subspace.
Then it follows  that the spectrum of $H_{\rm at} = - \Delta + V_C$  has the structure
\begin{align} \label{stucspec}
\sigma(H_{\rm at}) = \{ E_{{\rm at}, j} : j=0,..., M \} \cup [\Sigma_{\rm at}, \infty ) ,
\end{align}
where   $\Sigma := \inf \sigma_{\rm ess}(H_{\rm at})$, $M \in \{1,3,...\} \cup \{ \infty \}$,  $(E_{{\rm at}, j})_{j=1,...,M}$ is strictly monotone, and $E_{{\rm at},j} < \Sigma$, see \cite{ReedSimon.1978} and references therein.
For $\theta \in \R$, $\psi \in \HH_{\rm at}$ and $h \in \hh$  we define the transformations
\begin{align*}
U_{\rm el}(\theta) \psi(x_1,s_1,...,x_N,s_N)  & = e^{  \frac{3 \theta}{2} N}  \psi(e^\theta x_1,s_1,..., e^\theta x_N,s_N) , \quad x_j \in \R^3 ,  s_j \in \{1, 2s+1\}  \\
U_{\rm ph}(\theta) h(k,\lambda)  & = e^{-\frac{3 \theta}{2}}  h(e^{-\theta} k , \lambda ) , \quad (k,\lambda) \in \R^3 \times \Z_2 .
\end{align*}
Note that it is straight forward to see that these transformations are unitary. Let $\Gamma(U_{\rm ph})$ denote the bounded linear operator
on  $\FF(\hh)$ such that $\Gamma(U_{\rm ph})|_{\FF_n(\hh)} = \bigotimes_{j=1}^n U_{\rm ph}$.
We define $U(\theta) = U_{\rm el} \otimes \Gamma(U_{\rm ph})$. 
As an immediate consequence of the definitions one finds
\begin{align*}
U(\theta)  x_j U(\theta)^{-1}   & =  e^\theta x_j , \quad
U(\theta)  p_j U(\theta)^{-1}    =   e^{-\theta} p_j , \quad
 U(\theta)  H_{\rm f} U(\theta)^{-1}   =  e^{-\theta} H_{\rm f} .
\end{align*}
For the Hamiltonian $H$, given in \eqref{eq:defofham}, we define for $\theta \in \R$
$$
H(\theta) = U(\theta) H U(\theta)^{-1} .
$$
A straight forward calculation shows that
 \begin{equation} \label{eq:defofhamthet}
H(\theta)   = \sum_{j=1}^N  \{ (e^{-\theta} p_j + A_\theta(x_j))^2 +   \mu S_j \cdot B_\theta(x_j) \}  + e^{-\theta}  V_C \otimes \one  + e^{-\theta}  \one \otimes H_{\rm f} ,
\end{equation}
where  we defined
$$
A_{\theta,l}(x)  := \phi( G_{x,l,\theta} ) ,  \quad  B_{\theta,l}  := \phi(J_{x,l,\theta}) , \quad x \in \R^3 , \quad l=1,2,3 ,
$$
with
\begin{align*}
G_{x,l,\theta}(k,\lambda)  & := \frac{e^{-\theta} }{(2\pi)^{3/2}} \frac{\kappa(e^{-\theta} k)}{\sqrt{|k|}} \varepsilon_l(k,\lambda) e^{-ik \cdot x} , \\  J_{x,l,\theta}(k,\lambda)  & :=\frac{ e^{-\theta}}{(2\pi)^{3/2}}   \frac{\kappa( e^{-\theta} k)}{\sqrt{|k|}}  [ (-i e^{-\theta}   k)  \wedge \varepsilon(k,\lambda) ]_l e^{-ik \cdot x} ,
\end{align*}
for $k \in \R^3 \setminus \{ 0 \}$ and $\lambda \in \Z_2$.
We   assume that the coupling function is of the form
\begin{align} \label{modelasskappa}
\kappa(k)  = g \tilde{\kappa}(|k|) ,\quad  k \in \R^3 ,
\end{align}
 with $g \in \R$,  called minimal coupling constant, and with
$\tilde{\kappa} : (0,\infty)  \to \R$   a   positive  function,  which  satisfies   $\tilde{\kappa}(r) \to a$ as $r \to \infty$ for some $a \in [0,\infty)$, and has an analytic
continuation to a cone,     $\{ r  e^{i \varphi} :  r > 0 ,  - \tilde{\theta}_0 <   \varphi  <  \tilde{\theta}_0 \}$ for some $\tilde{\theta}_0  > 0$,
around the positive real axis, which is bounded and decays faster than any inverse polynomial, e.g. $\tilde{\kappa}(r) =  \exp(-r^4)$.
The Hamiltonian $H(\theta)$ is self-adjoint on the domain   $ \mathcal{D}(-\Delta + H_{\rm f})$ for any $\theta \in \R$, see Remark  \ref{rem:def}.
 For  $a \in \C$ and $ r \geq 0$  we define $D_r(a) :=\{ z \in \C : |z-a| < r \}$ and $D_r := D_r(0)$.

The  following lemma collects a elementary facts, which can be shown using elementary estimates.
A proof can be found for example in \cite[Lemma 1.1,Corollary 1.4]{BachFroehlichSigal.1999}
or \cite{HaslerLejsek.2022}.

\begin{lemma} \label{lem:anaext} Suppose $V= V_C$ and  \eqref{modelasskappa}.
Let $\theta_0 \in (0,\min\{ \tilde{\theta}_0,\pi/4 \})$.
  Then the following
holds. \begin{itemize}
\item[(a)]  There exists a $g_0> 0$  such that for all $g \in (-g_0,g_0)$   the mapping $\theta \mapsto H(\theta)$ has an analytic
continuation  to  $D_{\theta_0}$. The resulting analytic continuation
is an  analytic family  of type (A) with common domain of the operators $\mathcal{D}(-\Delta + H_{\rm f})$.
\item[(b)]  The maps  $\theta \mapsto A_\theta(x_j) (-\Delta + H_{\rm f} + 1)^{-1/2}$ and
  $\theta \mapsto     \overline{(-\Delta + H_{\rm f} + 1)^{-1/2}  A_\theta(x_j) |}^{\rm cl}_{\mathcal{D}((H_{\rm f} + 1)^{1/2})}$
are analytic functions on $D_{\theta_0}$.
\item[(c)] Let $W_g(\theta) := \sum_{j=1}^N \left\{   A_\theta(x_j) \cdot  e^{-\theta}  p_j + e^{-\theta}  p_j \cdot
A_\theta(x_j) +  A_\theta(x_j)^2   + \mu S_j \cdot B_\theta(x_j)  \right\}$. There exists a constant $C$ such that for all  $\theta \in D_{\theta_0}$ and $g \in \R$
\begin{align*}
\| W_g(\theta)(-\Delta + H_f + 1 )^{-1}  \|  \leq C |g|(1+|g|) .
\end{align*}
\end{itemize}
\end{lemma}

Henceforth,  we shall denote by $H(\theta)$ the analytic continuation  to a neighborhood of zero  granted by Part~(a) of Lemma~\ref{lem:anaext}.

 \begin{lemma}  \label{remeasyspecdef} Suppose $V= V_C$ and  \eqref{modelasskappa}.
  Let $\theta_0 \in (0,\min\{ \tilde{\theta}_0,\pi/4 \})$ and  $x \in \R$. Then there exists a
$y > 0$ and a  $g_0 > 0$ such that
\begin{align} \label{resest}
x + i y \in \rho(H(\theta))
\end{align}
 for all $\theta \in D_{\theta_0}$ with ${\rm Im} \theta \leq 0$  and $g \in (-g_0,g_0)$.
\end{lemma}
\begin{proof}  We devide the proof into steps. \\
\underline{Step 1:}
Since $V_C$ is infinitesimally $-\Delta$ bounded \cite{ReedSimon.1975},  we have
\begin{align}
\lim_{y \to \infty} \| V_C (  -\Delta  -  i y )^{-1} \|  = 0 .
\end{align}

\vspace{0.3cm}

\noindent
\underline{Step 2:}
Let $x \in \R$  and $y_0  = \sqrt{x^2 +1} $. Then
\begin{align*}
\sup_{\substack{ \theta \in D_{\theta_0}\\  {\rm Im} \theta \leq 0  } } \sup_{y \geq y_0 }
 \| ( -\Delta-  i y ) ( - e^{-2 \theta} \Delta  + e^{-\theta} H_f -  x - i y )^{-1} \| < \infty ,
\end{align*}
\begin{align*}
\sup_{\substack{ \theta \in D_{\theta_0}\\  {\rm Im} \theta \leq 0  } } \sup_{y \geq y_0 } \|
 (-\Delta + H_f + 1 ) ( - e^{-2 \theta} \Delta  + e^{-\theta} H_f -  x -  i y )^{-1} \| < \infty  .
\end{align*}
These bounds follow from the bounds \eqref{esofrespos0}--\eqref{esofrespos2}  below and the triangle inequality.
Let $\theta \in D_{\theta_0}$ with ${\rm Im} \theta \leq 0$.
Then using the spectral theorem and  ${\rm Im}  (  e^{- 2 \theta} r   +  e^{-\theta}  s + x )  \leq 0 $ for $r,s \in [0,\infty)$,
we obtain  for $y \geq y_0$ the following estimates.
We  estimate
\begin{align}\label{esofrespos0}
& \left\| y  ( - e^{- 2 \theta} \Delta  +  e^{-\theta}  H_f  -  x  -  i y )^{-1} \right\| \\
&  =    \sup_{r,s \geq 0 } \left| \frac{   y }{
 e^{- 2 \theta} r   +  e^{-\theta}  s - x - i y } \right| \leq \frac{y}{ \left|  {\rm Im}  (  e^{- 2 \theta} r   +  e^{-\theta}  s + x )   - y  \right|   } \leq   \frac{y}{   y   }  =1 ,\nonumber
\end{align}
and
\begin{align}\label{esofrespos1}
& \left\|-\Delta  ( - e^{- 2 \theta} \Delta  +  e^{-\theta}  H_f  -  x  -  i y )^{-1} \right\|   =    \sup_{r,s \geq 0 } \frac{ r   }{ \left|
 e^{- 2 \theta} r   +  e^{-\theta}  s -  x -  i y \right| }  \\
& \leq   \sup_{r,s \geq 0 } \frac{ r   }{
\left| ({\rm Re} ( e^{- 2 \theta} r   +  e^{-\theta}  s + x ))^2 +  y^2   \right|^{1/2}}  \leq
 \sup_{r,s \geq 0 }  \frac{ r   }{
\left| \frac{1}{2} ({\rm Re} ( e^{- 2 \theta} r   +  e^{-\theta}  s) )^2 +  y^2  - x^2   \right|^{1/2}  } \nonumber \\
& \leq   \sup_{r,s \geq 0 }  \frac{ r   }{
\left| \frac{1}{2} ({\rm Re} ( e^{- 2 \theta} r )^2 + 1 \right|^{1/2}  }  \leq   \sup_{r,s \geq 0 }  \frac{ r   }{
\left|\frac{1}{2} e^{-4 \theta_0}  \cos^2( 2 \theta_0) r^2   +  1  \right|^{1/2}  }  \leq  \frac{ \sqrt{2} e^{2 \theta_0}}{ \cos(2 \theta_0)} ,  \nonumber
\end{align}
 where in the second inquality  we used  $(u+v)^2 \geq \frac{1}{2} u^2 - v^2 $.
An analogous estimate  as \eqref{esofrespos1} (interchanging the roles of $s$ and $r$)  gives
\begin{align}\label{esofrespos2}
& \left\| H_f ( - e^{- 2 \theta} \Delta  +  e^{-\theta}  H_f  -  x  -  i y )^{-1} \right\|   =    \sup_{r,s \geq 0 }  \frac{ s }{ \left|
 e^{- 2 \theta} r   +  e^{-\theta}  s -  x -  i y\right|   } \leq  \frac{ \sqrt{2} e^{ \theta_0}}{ \cos( \theta_0)} .
\end{align}

\vspace{0.3cm}

\noindent
\underline{Step 3:}  It follows from Lemma \ref{lem:anaext} (c) and Step 1 and Step 2, that there exists a
$g_0 > 0$ and a $y > 0$  such that for all $\theta \in D_{\theta_0}$ with ${\rm Im} \theta \leq 0$
and $g \in (-g_0,g_0)$ we have
\begin{align}
\| ( e^{-\theta} V_C + W_g(\theta) ) ( - e^{-2 \theta} \Delta + e^{-\theta} H_f - x - i y )^{-1} \| \leq 1/2 .
\end{align}
It follows using a Neumann expansion for the operator
 $$H(\theta) - x - i y =  ( - e^{-2 \theta} \Delta + e^{-\theta} H_f  - x - i y ) + ( e^{-\theta} V_C + W_g(\theta) )$$  that $x + i y \in \rho(H(\theta))$
 for all $\theta \in D_{\theta_0}$ with ${\rm Im} \theta \leq 0$
and $g \in (-g_0,g_0)$.

\end{proof}

For $x \in \R^n$ we write $\langle x \rangle = (  1 + x^2 )^{1/2} $.
The proof 
 of Theorem  \ref{propverihypD}, below,   is  based on the following nontrivial result, which is stated in the  following  Hypothesis. 

\begin{hyp}  \label{GSanahyp00}  The ground state $\psi_{\rm gs}$ of $H$ has the following properties. There exist positive $\beta > 0$ and
 $\theta_0 > 0$ such that
\begin{itemize}
\item[(i)] $\theta \mapsto \psi_{{\rm gs},\theta} := U(\theta) \psi_{\rm gs}$ has an $\HH$--valued  analytic extension for $\theta$ into $D_{\theta_0}$,
\item[(ii)]  $\|  e^{ \beta \langle \cdot \rangle  } \psi_{{\rm gs},\theta}\| $  is uniformly bounded  on $D_{\theta_0}$.
\end{itemize}
\end{hyp}

\begin{remark} \label{GSanahyp002} {\rm   We will use the following simple consequence of  Hypothesis  \ref{GSanahyp00}   that $\theta \mapsto  e^{ \beta \langle \cdot \rangle  } \psi_{{\rm gs},\theta}$
is an $\HH$--valued analytic function. This follows as an application  of  the abstract result in Lemma   \ref{thm:ana2}.   }
\end{remark}

Let us now introduce notation to formulate
Fermi's golden rule condition. Let $P_{{\rm at},j} = 1_{\{E_{{\rm at}, j} \}}(H_{\rm at})$, where $1_A(x) := 1$ if $x \in A$ and otherwise $1_A(x) = 0$.
Define   for $ k \in \R^3 \setminus \{ 0 \}$ and $ \lambda \in \Z_2$   the operator
$$
w(k,\lambda)  := \sum_{j=1}^N \left\{ 2 G_{x_j}(k,\lambda) \cdot p_j + \mu S_j \cdot J_{x_j}(k,\lambda) \right\}
$$
in $\HH_{\rm at}$ on the domain $D((-\Delta)^{1/2})$ and the linear map
\begin{align}
Z_j  & := \lim_{\epsilon \downarrow 0} \sum_{\lambda=1,2} \int_{\R^3} P_{{\rm at},j} w(k,\lambda) \overline{P}_{{\rm at},j}
\left(  \overline{P}_{{\rm at},j} H_{\rm at} - E_{{\rm at}, j}+ |k| - i \epsilon \right)^{-1}  \overline{P}_{{\rm at},j}  w(k,\lambda)^* P_{{\rm at},j} dk \nonumber \\
& +  \sum_{\lambda=1,2} \int_{\R^3} P_{{\rm at},j} w(k,\lambda) {P}_{{\rm at},j}  w(k,\mu)^* P_{{\rm at},j} \frac{dk}{|k|}  ,
\label{defofZj}
\end{align}
where  $\overline{P}_{{\rm at},j} =\one_{\HH_{\rm at}}  -  {P}_{{\rm at},j}$.
We say that a Fermi's golden Rule condition holds  for the $j$-the  eigenvalue, $E_j$, if
\begin{align} \label{fermigoldenrule}
  {\rm Im} Z_j   > 0  .
\end{align}
Let us state  the following theorem, which follows directly from the results in   \cite{BachFroehlichSigal.1999,HaslerHerbstHuber.2008}.
To formulate it, we define
 \begin{align} \label{defofdeltaj} \delta_j := \dist(E_{{\rm at}, j} , \sigma(H) \setminus \{ E_{{\rm at}, j} \} ) \end{align} and
\begin{equation}
\mathcal{A}(E, \delta,c)  :=  ( E - \delta/2 , E + \delta/2 )+ i [ - c  g^{2}, \infty)  .
\end{equation}

\begin{figure}[h]
\begin{center}
\begin{tikzpicture}[scale=1]
\draw[white, semitransparent, fill=green!10]  (2.8,-0.3) rectangle (4.7,3) ;
\draw[green]  (2.8,-0.3) --  (2.8,3);
\draw  (3.75,1) node[above] {$\mathcal{A}(E_{{\rm at}, j},\delta_j,c_j)$};
\draw[green]  (2.8,-0.3) --  (4.7,-0.3);
\draw[green]  (4.7,-0.3) --  (4.7,3);
\draw[blue!30, semitransparent, fill=blue!30] (-2,0)  --  (0,-2) -- (2,-2);
\draw[blue!30, semitransparent, fill=blue!30] (5.6,-0.6)  --  (7.6,-2) -- (9.6,-2);
\draw[blue!30, semitransparent, fill=blue!30] (-0.3,-0.5)  --  (1.7,-2) -- (3.7,-2);
\draw[blue!30, semitransparent, fill=blue!30] (1.2,-0.5)  --  (3.2,-2) -- (5.2,-2);
\draw[blue!30, semitransparent, fill=blue!30] (3.2,-0.5)  --  (5.2,-2) -- (7.2,-2);
\draw[blue!30, semitransparent, fill=blue!30] (3.6,-0.6)  --  (5.6,-2) -- (7.6,-2);
\draw[blue!30, semitransparent, fill=blue!30] (5.2,-0.5)  --  (7.2,-2) -- (9.2,-2);
 \draw[-] (-4,0) -- (0.5,0) ;
  \draw[thick, dotted] (0.6,0) -- (0.9,0) ;
  \draw[-] (1,0) -- (7,0) ;
    \draw[thick, dotted] (7.1,0) -- (8,0) ;
      \draw[->] (8.1,0) -- (9,0) coordinate (x axis);
\draw (-1,-0.1) -- (-1,0.1) node[above]{$E_{{\rm at}, 0}$};
\draw[->, thick] (-1.2,0.2) -- (-1.8,0.2) ;
\filldraw[blue!80]  (-2,0) circle (2pt)  node[left] {$E_{{\rm gs}}$} ;
\draw (0.3,-0.1) -- (0.3,0.1) ;
\draw[->, thick] (0.2,-0.1) -- (-0.2,-0.4) ;
\filldraw[blue!80]  (-0.3,-0.5) circle (2pt)   ;
\draw (1.8,-0.1) -- (1.8,0.1) node[above]{$E_{{\rm at}, j-1}$} ;
\draw[->, thick] (1.7,-0.1) -- (1.3,-0.4) ;
\filldraw[blue!80]  (1.2,-0.5) circle (2pt)   ;
\draw (3.8,-0.1) -- (3.8,0.1) node[above]{$E_{{\rm at}, j}$};
\draw[->, thick] (3.7,-0.1) -- (3.3,-0.4) ;
\filldraw[blue!80]  (3.2,-0.5) circle (2pt)  node[left] {$E_{ j,1}$} ;
\draw[->, thick] (3.8,-0.1) -- (3.6,-0.5) ;
\filldraw[blue!80]  (3.6,-0.6) circle (2pt) node[right] {$E_{ j,2}$} ;
\draw (5.8,-0.1) -- (5.8,0.1)  node[above]{$E_{{\rm at}, j+1}$}  ;
\draw[->, thick] (5.7,-0.1) -- (5.3,-0.4) ;
\filldraw[blue!80]  (5.2,-0.5) circle (2pt)  ;
\draw[->, thick] (5.8,-0.1) -- (5.6,-0.5) ;
\filldraw[blue!80]  (5.6,-0.6) circle (2pt)  ;
 \end{tikzpicture}
\end{center}
\caption{\small Illustration of the situation in Theorem \ref{bfs99thm3.2}, where the putative spectrum  of $H(\theta)$
is indicated with blue. }
\end{figure}
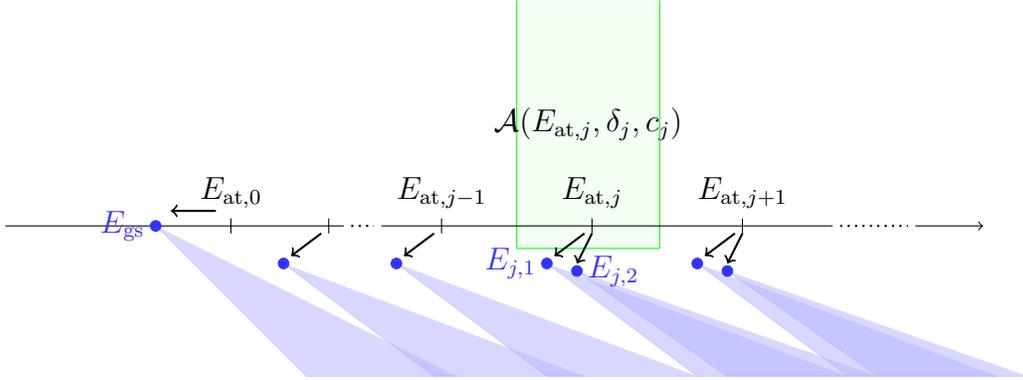

\begin{theorem} 
\label{bfs99thm3.2} Suppose $V= V_C$ and  \eqref{modelasskappa}.
Let $\theta_0 \in (0,\min\{ \tilde{\theta}_0,\pi/4 \})$ and
  $j \in \{1,...,N\}$.  Assume
    ${\rm Im} Z_j  > 0$.  Suppose  $\theta = i \vartheta$ with  $\vartheta \in (0, \min\{ \theta_0, \pi/4\})$.
Then there exist   $c_j > 0$ and  $g_0 > 0$ such that for all $g \in \R$ with $0 < |g| < g_0$
$$
\mathcal{A}(E_{{\rm at}, j}, \delta_j ,c_j)  \subset \rho(H(\theta) )  ,
$$
in particular the interval $[E_{{\rm at}, j} - \delta_j/2, E_{{\rm at}, j} + \delta/2]$ is contained in the resolvent set $\rho(H(\theta) )$,
and there exists a constant $C_j$  such that for any $z \in  \mathcal{A}(E_{{\rm at }, j}, \delta_j ,c_j) $  we have
\begin{align*}
\| H_g(\theta) -z )^{-1} \|  \leq C_j .
\end{align*}
\end{theorem}
\begin{proof}
The theorem follows directly from   \cite[Corollary 8]{HaslerHerbstHuber.2008}  (or  \cite[Theorem 3.2]{BachFroehlichSigal.1999} which needs an additional
non-degeneracy condition) by observing that the set given there contains  for some $c_j > 0$ the set  $\mathcal{A}(E_{{\rm at}, j}, \delta_j ,c_j)$ for all small nonzero  $|g|$.
The result \cite[Corollary 8]{HaslerHerbstHuber.2008} is formulated for electrons with spin $s=1/2$, but the proof carries over to the spinless case $s=0$ (by dropping the additional term which couples to the spin).
 The main idea of the proof  of  \cite[Corollary 8]{HaslerHerbstHuber.2008} 
 is to consider $H(\theta)|_{g=0}$ and to use a perturbation expansion   in $g$. More explicitly, one studies a so called  Feshbach projection with respect to an  energy interval around  $E_j$ and uses  Fermi's golden rule condition.
\end{proof}

We are now ready to prove the main   theorem of this section.

\begin{theorem} \label{propverihypD} Suppose $V= V_C$,   \eqref{modelasskappa} and   that Hypothesis \ref{GSanahyp00} holds. Let $j \in \{1,..., M \}$. Suppose ${\rm Im} Z_j  > 0$.
Then there exists a $g_0 > 0$ such that for all real $g$ with $0< |g| <  g_0$ Hypothesis  \ref{hypresikvebt2}  holds for the Hamiltonian $H$ and  the set $$S_j  = \{ k \in \R^3 \setminus \{ 0 \} :  E_{{\rm at}, j} - E_{\rm gs} -  \delta_j/2  <  \omega(k)  <  E_{{\rm at}, j} - E_{\rm gs}  +  \delta_j/2 \} . $$
In fact, the $T$-matrix is   for $k, k' \in S$  a $C^\infty$-function of $|k|$ and $|k'|$.
\end{theorem}

For the  proof of  Theorem  \ref{propverihypD}     we will use  the following interpolation result.

\begin{lemma}  \label{symop2} Suppose $V= V_C$ and  \eqref{modelasskappa}.
Let $\theta_0 \in (0,\min\{ \tilde{\theta}_0,\pi/4 \})$ and
let $g_0 > 0$ be such that
the assertions  of Lemma \ref{lem:anaext}  hold.  Let $|g| < g_0$ and let
 $U \subset D_{\theta_0}  \times \C$ be an open set such that $ (H(\theta) - z )^{-1}$ is an analytic
 bounded operator valued   function of  $(\theta,z) \in U$. Then  for each $(\theta, z) \in U$ the closure of the   operator
\begin{align} \label{symop}
&  ( - \Delta+ H_{\rm f} + 1)^{1/2}    (H(\theta) - z )^{-1} ( -\Delta + H_{\rm f} + 1)^{1/2}
\end{align}
is bounded and depends analytically on $(\theta,z)$ in $U$.
 \end{lemma}

 \begin{proof}  First we show  that the operators $(H(\theta) - z )^{-1}  ( -\Delta + H_{\rm f} + 1)$  with domain   $\mathcal{C}$  depend continuously on $(\theta,z) \in U$, w.r.t. the  operator norm topology.
  For this let $(\theta',z') \in U$.  Since $H(\theta)$ is an analytic family of type (A), by Lemma  \ref{lem:anaext},  it follows that
  $$
  T_0(\theta) :=  (H(\theta') - z')^{-1} (H(\theta') - H(\theta) )
  $$
  is a bounded operator the domain $\mathcal{D}(-\Delta + H_{\rm f})$   and    $\lim_{\theta \to \theta'} \lim T_0(\theta) = 0 $ (see for example   \cite[VII. \S 2  (2.4)]{Kato.1980}). Thus for $(\theta,z)$ sufficiently close to $(\theta',z')$ we  obtain   from  a Neumann
  expansion the following identity   on $\mathcal{C}$
 \begin{align}
 &  (H(\theta) - z )^{-1}  ( -\Delta + H_{\rm f} + 1)  \nonumber \\
  & =  (H(\theta) - z )^{-1} (H(\theta) - z'  )  (H(\theta) -  z'  )^{-1}   ( -\Delta + H_{\rm f} + 1) \nonumber \\
& =   ( 1 + (z - z'  )  (H(\theta) - z )^{-1} )  (H(\theta') -  z'  - (  H(\theta')-H(\theta) )   )^{-1}   ( -\Delta  + H_{\rm f} + 1)  \nonumber \\
& =   ( 1 + (z - z'  )  (H(\theta) - z )^{-1} ) \sum_{n=0}^\infty  T_0(z)^n    (H(\theta') -  z'  )^{-1}    ( -\Delta + H_{\rm f} + 1) .  \label{opcont}
\end{align}
Continuity  of  \eqref{opcont}  in $(\theta,z)$ now follows, since  $ (H(\theta') -  z'  )^{-1}    ( -\Delta + H_{\rm f} + 1) $ is bounded  (for the latter  observe that $H(\theta')$ is a closed operator with the same domain as
the closed operator $-\Delta + H_{\rm f}$, c.f. \cite[Theorem 5.9]{Weidmann.1980}).
Likewise one  shows that  $( -\Delta + H_{\rm f} + 1) (H(\theta) - z )^{-1} $ is a bounded operator depending continuously on $(\theta,z)$.
It now follows from interpolation, explicitly  Lemma  \ref{intpol}, that \eqref{symop}  extends to a bounded operator with norm uniformly bounded on compact subsets of $U$.
 Since $ \inn{ \phi_1 , \eqref{symop}  \phi_2}$ is by assumption   for $\phi_1 , \phi_2 \in \mathcal{D}(-\Delta + H_f)$
analytic on $U$  (see e.g. \cite[Theorem XII.7]{ReedSimon.1978})    the analyticity of  the closure of \eqref{symop} now follows,
because  weak analyticity implies strong analyticity, cf.  Lemma \ref{thm:ana2}.
 \end{proof}

\begin{proof}[Proof of Theorem \ref{propverihypD}] First observe that  by Hypothesis \ref{GSanahyp00} there exists
a ground state.
First we show Hypothesis  \ref{hypresikvebt} for the set $S_j$.
  For  fixed $k \in \R^3 \setminus \{ 0 \}$ we consider  the dilation of  the operator $D_1(k,\lambda)$ for $\theta \in \R$
\begin{align*}
 &  D_1(k,\lambda;\theta)    :=  U(\theta) D_1(k,\lambda)  U(\theta)^{-1}  \\
&  \ =   \frac{  1 }{(2\pi)^{3/2}}  \sum_{l=1}^N \left\{    ( e^{-\theta} p_l +  A_\theta(x_l)  )  \cdot 2  \frac{ \varepsilon(k,\lambda)  }{\sqrt{2 |k|} } \overline{\kappa(k)} e^{  i k \cdot  e^\theta x_l}+     \mu   S_l \cdot \frac{ i k \wedge   \varepsilon(k,\lambda) }{\sqrt{2 |k|} }  \overline{\kappa(k)} e^{  i k  \cdot e^\theta x_j}\right\}   ,
\end{align*}
(note we interchanged the factors of the first term  on the right hand side, which is justified  by  $\epsilon(k,\lambda) \cdot k = 0$).
Now choose $\theta_0 \in (0,\min\{ \tilde{\theta}_0,\pi/4 \})$ such that the assertions of  Lemma ~\ref{lem:anaext},
Lemma~\ref{remeasyspecdef} for $x = E_{{\rm at},j}$, and
 Hypothesis~\ref{GSanahyp00} hold   for some  $g_0 > 0$ and   $\beta > 0$. By possibly making  $\theta_0 > 0$ smaller,  we can assume without loss that
\begin{equation} \label{estEvsG}
\theta_0 <      \frac{ \beta e^{-\pi/4}}{  E_{{\rm at}, j} - E_{\rm gs}  +  \delta_j/2  } =
   \frac{ \beta e^{-\pi/4}}{  \sup \{   |k|  : k \in S_j \}  }   .
\end{equation}
  It  follows  from \eqref{estEvsG} that  for $k \in S_j$
\begin{align} \label{anaest333}
| {\rm Re } (  i  k \cdot e^\theta x_l )  | = e^{{\rm Re} \theta} | \sin({\rm Im } \theta) |   |k| |x|  \leq e^{\pi/4}  \theta_0
 |k| |x|  < \beta |x|  , \quad l=1,...,N .
 \end{align}
 It follows   with  $X := (p^2 + H_{\rm f} + 1 )$ that for $k \in S_j$
 \begin{align}  & X^{-1/2}   D_{1} (k,\lambda; \theta) \psi_{{\rm gs}, \theta}  \label{stateana}  \\
& = \frac{  1 }{(2\pi)^{3/2}}  \sum_{l=1}^N 2 X^{-1/2}  ( e^{-\theta} p_l +  A_\theta(x_l)  ) \cdot   \frac{ \varepsilon(k,\lambda)  }{\sqrt{2 |k|} } \overline{\kappa(k)}  e^{  i k \cdot  e^\theta x_l}  e^{- \beta|x|} e^{\beta |x| }  \psi_{{\rm gs}, \theta} \nonumber  \\
&  +    X^{-1/2}  \sum_{l=1}^N   \mu   S_l \cdot \frac{ i k \wedge   \varepsilon(k,\lambda) }{\sqrt{2 |k|} }  \overline{\kappa(k)}  e^{  i k  \cdot e^\theta x_l} e^{- \beta|x|} e^{\beta |x| } \psi_{{\rm gs}, \theta} \nonumber
\end{align}  is an $\HH$-valued analytic function of $\theta$  on $ D_{\theta_0}$
as   a consequence of    Lemma  \ref{lem:anaext},     Remark   \ref{GSanahyp002} to Hypothesis    \ref{GSanahyp00} and
inequality  \eqref{anaest333}.

Now we observe  for any $z  \in \C$ with ${\rm Im} z > 0$ that  for $\theta \in \R$ and $k,k' \in \R^3 \setminus \{ 0 \}$
\begin{align}
& \inn{  D_{1}(k,\lambda) \psi_{\rm gs} , (H   - z  )^{-1}  D_{1} (k',\lambda')\psi_{\rm gs} } \label{eq:boundaryvalueres0}   \\
& = \inn{  D_{1}(k,\lambda;\theta) \psi_{{\rm gs}, \theta}  , (H(\theta)   - z )^{-1}  D_{1} (k',\lambda';\theta) \psi_{{\rm gs}, \theta} }   \nonumber \\
& = \inn{ X^{-1/2}  D_{1}(k,\lambda;\theta) \psi_{{\rm gs}, \theta}  ,  X^{1/2} (H(\theta)   - z )^{-1}  X^{1/2} X^{-1/2} D_{1} (k',\lambda';\theta) \psi_{{\rm gs}, \theta} }  \nonumber   ,
\end{align}
where we used the  unitarity of  dilation and a   trivial insertion of the identity $1 = X^{1/2} X^{-1/2}$.

Next we consider an analytic continuation of \eqref{eq:boundaryvalueres0}.  First,   for  fixed  spectral parameter
$z $ in the complex upper half
plane we extend $\theta$ into the complex  lower half plane, and then, for fixed $\theta$ in the lower half plane we extend the spectral parameter
$z$ from the complex upper half plane across the real axis into the complex lower half plane.
By Lemma \ref{remeasyspecdef}  we  can choose   $ y \in (0,\infty)$ sufficiently
large such that  $E_{{\rm at},j } + i y \in \rho(H(\theta) )$
 for all $g  \in (-g_0,g_0)$ and  $\theta \in D_{\theta_0}$ with ${\rm Im} \theta \geq  0$.
Thus it follows that for $z = z_0  :=  E_{{\rm at},j } + i y$ the right hand side of  \eqref{eq:boundaryvalueres0}
 is complex differentiable  in  points $\theta \in D_{\theta_0}$ with   ${\rm Im}\theta \geq 0$,
 by the analyticity of  \eqref{stateana}   and the analyticity of  $X^{1/2} (H(\theta)  - z_0  )^{-1}  X^{1/2}$, which in turn holds  by
 Lemma \ref{symop2}. Since  the left  hand side  of \eqref{eq:boundaryvalueres0} does not depend on $\theta$
 we conclude  by analytic continuation that  for $\vartheta \in (0,\theta_0)$  and $z = z_0$
 \begin{align}
& \inn{  D_{1}(k,\lambda) \psi_{\rm gs} , (H   - z  )^{-1}  D_{1} (k',\lambda')\psi_{\rm gs} }\label{eq:boundaryvalueres2}   \\
& = \inn{ X^{-1/2}  D_{1}(k,\lambda;i \vartheta) \psi_{{\rm gs}, i \vartheta}  ,  X^{1/2} (H(i \vartheta)   - z  )^{-1}  X^{1/2} X^{-1/2} D_{1} (k',\lambda'; i \vartheta) \psi_{{\rm gs}, i \vartheta} }     .   \nonumber
\end{align}
Now fix $\vartheta \in (0, \theta_0)$.  Since $H$ is self-adjoint  the left hand side of  \eqref{eq:boundaryvalueres2} is analytic in $z$ on the upper half complex plane.
By  Theorem  \ref{bfs99thm3.2}  there exists a $c_j > 0$  and a $g_1 \in ( 0, g_0)$ such that
\begin{align} \label{resass}  \mathcal{A}(E_j,\delta_j,c_j) \subset \rho(H(i \vartheta)) \end{align}  for $0 < |g|<g_1$.
Thus the right hand side of \eqref{eq:boundaryvalueres2} is an anlytic function of $z \in \mathcal{A}(E_j,\delta_j,c_j)$.
Since  \eqref{eq:boundaryvalueres2}  holds for $z= z_0$ and $z_0 \in \mathcal{A}(E_j,\delta_j,c_j)$
we conclude that the left hand side of \eqref{eq:boundaryvalueres2}  has an analytic continuation to  $z \in \mathcal{A}(E_j,\delta_j,c_j)$,
and that this analytic continuation satisfies for $0 < |g|  < g_1$   and all $z \in   \mathcal{A}(E_j,\delta_j,c_j)$ the identity
 \begin{align}
& \inn{  D_{1}(k,\lambda) \psi_{\rm gs} , (H    - z  )^{-1}  D_{1} (k',\lambda')\psi_{\rm gs} }\label{eq:boundaryvalueres22}   \\
& = \inn{ X^{-1/2}  D_{1}(k,\lambda;i \vartheta) \psi_{{\rm gs}, i \vartheta}  ,  X^{1/2} (H(i \vartheta)  - z )^{-1}  X^{1/2} X^{-1/2} D_{1} (k',\lambda'; i \vartheta) \psi_{{\rm gs}, i \vartheta} }     .   \nonumber
\end{align}
Since $k' \in S_j$ implies  $E_{\rm gs} + |k'| \in \mathcal{A}(E_j,\delta_j,c_j)$, we
 see from  \eqref{resass}  by inserting $z = E_{\rm gs} + |k'|$ into  \eqref{eq:boundaryvalueres22}  that   Hypothesis  \ref{hypresikvebt} holds.

Next we show regularity of \eqref{defofT}   in $|k|$ and $|k'|$. Let  $k \in S_j$ and $k' \in \R^3 \setminus \{ 0 \}$.
By what we showed  we can evaluate  \eqref{eq:boundaryvalueres22} at the point $z = E_{\rm gs}  +  \omega(k') \in  \mathcal{A}(E_j,\delta_j,c_j)$ and find
 \begin{align}
& \inn{  D_{1}(k,\lambda) \psi_{\rm gs} , (H - E_{\rm gs}  - \omega(k')  )^{-1}  D_{1} (k',\lambda')\psi_{\rm gs} }  \label{eq:boundaryvalueres222} \\
& = \inn{ X^{-1/2}  D_{1}(k,\lambda;i \vartheta) \psi_{{\rm gs}, i \vartheta}  ,  X^{1/2} (H(i \vartheta) - E_{\rm gs}  - \omega(k') )^{-1}  X^{1/2} X^{-1/2} D_{1} (k',\lambda'; i \vartheta) \psi_{{\rm gs}, i \vartheta} }     .   \nonumber
\end{align}
First observe that   the vector given  in \eqref{stateana} is  for any $\theta \in D_{\theta_0}$  arbitrary many  times  differentiable  in $|k|$ for $k \neq 0$
(as  a $\HH$--valued differentiable  function), since $k \mapsto   e^{  i k \cdot  e^{ i \vartheta}  x_j}  e^{- \beta|x|}$ is for $k \in S_j$  a $C^\infty$--map  into the bounded operators on $\HH_{\rm at}$, cf. \eqref{anaest333}.
Clearly,  $ k \mapsto \omega(k)$ is on $S_j$ a $C^\infty$--map  and the  resolvent is an analytic and hence $C^\infty$--function of its spectral parameter. We thus conclude   that
the right hand side of \eqref{eq:boundaryvalueres222}
is infinitely often differentiable as a function of    $|k'|$ for $k' \in S_j$.
Similarly (but simpler as it does not involve derivatives with respect to the spectral parameter of the resolvent), one  shows for $k' \in \R^3 \setminus \{ 0 \}$   the  $C^\infty$-differentiability  as a function of $|k|$ .
 The treatment of the second term in \eqref{defofT} is similar but  simpler.
Indeed,  the resolvent map is always evaluated   in the resolvent set, so no analytic dilation is necessary.
Finally, that   the last term in  \eqref{defofT}, i.e.,
\begin{align*}
& \inn{ \psi_{{\rm gs}}  , D_2(k,\lambda,k' , \lambda') \psi_{{\rm gs}} }  \\
& = \left\langle   \psi_{{\rm gs}}  ,  \frac{ 2 }{(2\pi)^{3}}  \sum_{l=1}^N  e^{  i ( k - k') \cdot x_l}  \overline{  \kappa(k) } \kappa(k')   \frac{  \varepsilon(k,\lambda) \cdot \varepsilon(k',\lambda') }{\sqrt{2|k|} \sqrt{2|k'|}  }  \psi_{{\rm gs}}  \right\rangle
\end{align*}
is infinitely differentiable as a function of $|k|$ and $|k'|$  for $k, k' \in \R^3 \setminus \{ 0 \}$  follows easily  by  dominated convergence and the exponential decay of  $\psi_{{\rm gs}}$. This shows  Proposition   \ref{propverihypD}.
\end{proof}

By collecting the results of this section and combining them with Theorem   \ref{main:qed}   we obtain
the following result, recalling the notation introduced in   \eqref{stucspec},   \eqref{defofZj},   and \eqref{defofdeltaj}.

\begin{theorem}  \label{main:qedconc}  Suppose $V= V_C$,   \eqref{modelasskappa} and   that Hypotheses
\ref{thm:fgs01thm4} and  \ref{GSanahyp00} hold.  Let $j \in \{1,..., M \}$. If  ${\rm Im} Z_j  > 0$,
 then there exists a $g_0> 0$
such that for   $0 < |g| < g_0$ the followoing holds for the set
$$
S_j  =  \{ k \in \R^3 \setminus \{ 0 \} :
 E_{{\rm at}, j} - E_{\rm gs} -  \delta_j/2  <  \omega(k)  <  E_{{\rm at}, j} - E_{\rm gs}  +  \delta_j/2 \}  .
$$
For  $f, h  \in C_c(S_j)^2$
we have
\begin{align}
 & \inn{ a^*_{\rm out}(f) \psi_{\rm gs} , a^*_{\rm in}(h) \psi_{\rm gs} }  - \inn{ f , h }  \label{eq:mainform2} \\
& = - 2 \pi  i \sum_{\lambda, \lambda'}\int_{\R^3}  \int_{\R^3} \overline{ f(k , \lambda)} \delta(\omega(k) - \omega(k') )   h(k',\lambda')  T(k,\lambda,k', \lambda') dk'  dk  .  \nonumber
\end{align}
\end{theorem}

\begin{remark}  {\rm Hypothesis  \ref{GSanahyp00}   has been verified in \cite{HaslerLejsek.2022} for  $V= V_C$,  \eqref{modelasskappa}, spinless ``electrons''  and small values of the coupling constant $g$ in case the atomic Hamiltonian  has a   nondegenerate ground state  (which is known to be  the
case for the  Hydrogen atom).   }
\end{remark}

\begin{remark}{\rm
We note that Hypothesis \ref{thm:fgs01thm4} has been shown in
 \cite[Theorem 4]{FrohlichGriesemerSchlein.2001} in the spinless case.
The result in  \cite{FrohlichGriesemerSchlein.2001} was  proven for  models,
for which  $V = V_C$ holds, 
but with the additional assumption  that $\tilde{\kappa}$ vanishes outside a compact set.
There is no obvious reason  why  the result  in  \cite{FrohlichGriesemerSchlein.2001} as well as its  proof
do not carry over to situations where $\tilde{\kappa}$ is merely exponentially decaying  (as was noted by one of the authors).
 In particular,  given that in \cite{FrohlichGriesemerSchlein.2002} an analogous result was shown
 for a the  related  Nelson model, where only exponential falloff   for  the coupling function   was  assumed.

}
\end{remark}

\begin{remark} \label{hasherbhub} {\rm
For the hydrogen atom $V = V_C$ with  $Z=N=1$   and  \eqref{modelasskappa},
with  the physically natural assumption $\lim_{r \downarrow 0} \tilde{\kappa}(r)  = 1$,
we believe that for any  $j \geq 2$ and  $\alpha >0$  sufficiently small   Fermi's golden rule condition
 ${\rm Im} Z_j > 0$   holds.
 Explicitly, this should follow by means of the   scaling
$ U_{\rm el}(\xi) (-\Delta - \alpha |x|^{-1} ) U_{\rm el}^{-1}(\xi ) =  \alpha^2 (-\Delta - |x|^{-1} )$, with $\xi = - \ln \alpha$,  from an  expansion  in $\alpha$ around 0
and the explicit calculation  in  \cite[Theorem B.1]{HaslerHerbstHuber.2008}. }
\end{remark}

\section*{Acknowledgements}

The author wants to thank Christian  Lejsek for valuable discussions.

\appendix

\section{Estimates for  creation and annihilation operators}

The following lemma gives elementary estimates of the creation operators in terms of the free field operator.

\begin{lemma} \label{lem:elemest}  For each  $n \in \N$ there  exists a finite constant $C_n$ such that for all  $h_i \in L^2_\omega(\R^3 \times \Z_2)$, $i=1,...,n$, the inequality
\begin{align*}
\| a^\#(h_1) \cdots  a^\#(h_n) (H_{\rm f} + 1)^{-n/2} \| \leq  C_n \prod_{i=1}^n \| h_i \|_\omega
\end{align*}
holds.
\end{lemma}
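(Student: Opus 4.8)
The plan is to reduce the estimate to Wick-ordered monomials and then to induct on the number of creation operators, running all computations first on the dense subspace of finite particle vectors with smooth, compactly supported components and extending by density at the end.

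First I would record the two elementary one-operator bounds on which everything rests. From $H_f=\sum_\lambda\int\omega(k)\,a^*(k,\lambda)a(k,\lambda)\,dk$ one has $\|H_f^{1/2}\Psi\|^2=\sum_\lambda\int\omega(k)\|a(k,\lambda)\Psi\|^2\,dk$, and hence, by the Cauchy--Schwarz inequality on $L^2(\R^3\times\Z_2)$, $\|a(h)\Psi\|\le\|\omega^{-1/2}h\|\,\|(H_f+1)^{1/2}\Psi\|$, so $\|a(h)(H_f+1)^{-1/2}\|\le\|\omega^{-1/2}h\|\le\|h\|_\omega$. Combining this with the CCR identity $\|a^*(h)\Psi\|^2=\|a(h)\Psi\|^2+\|h\|^2\|\Psi\|^2$ and with $\|h\|\le\|h\|_\omega$ gives $\|a^*(h)(H_f+1)^{-1/2}\|\le\sqrt{2}\,\|h\|_\omega$.

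Next I would use the CCR $[a(g),a^*(h)]=\langle g,h\rangle$ and $[a^\#(g),a^\#(h)]=0$ to write $a^\#(h_1)\cdots a^\#(h_n)$ as a finite sum of terms, each of which is a product of inner products $\langle h_i,h_j\rangle$ (the contractions, each bounded in modulus by $\|h_i\|_\omega\|h_j\|_\omega$) times a Wick-ordered monomial $a^*(g_1)\cdots a^*(g_p)\,a(\tilde g_1)\cdots a(\tilde g_q)$, with the $g_i,\tilde g_j$ drawn from $h_1,\dots,h_n$ and $p+q+2r=n$, where $r$ is the number of contractions. Since $p+q\le n$, I would factor $(H_f+1)^{-n/2}=(H_f+1)^{-(n-p-q)/2}(H_f+1)^{-(p+q)/2}$ and discard the first factor (norm $\le 1$). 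Thus the lemma follows once one proves, for Wick-ordered monomials,
\[
\|a^*(g_1)\cdots a^*(g_p)\,a(\tilde g_1)\cdots a(\tilde g_q)\,(H_f+1)^{-(p+q)/2}\|\ \le\ C_{p,q}\prod_{i=1}^p\|g_i\|_\omega\prod_{j=1}^q\|\tilde g_j\|_\omega ,
\]
the remaining $\|h_i\|_\omega$ factors being supplied by the contracted pairs; only finitely many terms occur, so the triangle inequality closes the reduction.

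Then I would prove this Wick-ordered estimate by induction on $p$, with $q\ge 0$ arbitrary. For $p=0$ I would bound $\|a(\tilde g_1)\cdots a(\tilde g_q)\Psi\|\le\int\prod_j|\tilde g_j(\tilde k_j)|\,\|a(\tilde k_1)\cdots a(\tilde k_q)\Psi\|\prod_j d\tilde k_j$, split each weight as $|\tilde g_j(\tilde k_j)|=\big(|\tilde g_j(\tilde k_j)|\,\omega(\tilde k_j)^{-1/2}\big)\,\omega(\tilde k_j)^{1/2}$, apply Cauchy--Schwarz, and then use the sector-by-sector identity $\int\prod_j\omega(\tilde k_j)\,\|a(\tilde k_1)\cdots a(\tilde k_q)\Psi\|^2\prod_j d\tilde k_j=\langle\Psi,{:}H_f^q{:}\,\Psi\rangle\le\langle\Psi,H_f^q\Psi\rangle$ (the Wick power ${:}H_f^q{:}$ acting on the $m$-particle sector as a sum of nonnegative terms all dominated by those of $H_f^q$); taking $\Psi=(H_f+1)^{-q/2}\psi$ gives $C_{0,q}=1$. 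For $p\ge 1$ I would write the monomial as $a^*(g_1)W'$ with $W'=a^*(g_2)\cdots a^*(g_p)\,a(\tilde g_1)\cdots a(\tilde g_q)(H_f+1)^{-(p+q)/2}$ and use $\|a^*(g_1)W'\psi\|^2=\|a(g_1)W'\psi\|^2+\|g_1\|^2\|W'\psi\|^2$. The factor $\|W'\psi\|$ has $p-1$ creation operators and one surplus factor $(H_f+1)^{-1/2}$ (norm $\le1$) to discard, hence is controlled by the inductive hypothesis; and in $a(g_1)W'\psi$ one commutes $a(g_1)$ to the right through $a^*(g_2),\dots,a^*(g_p)$, producing a main term $a^*(g_2)\cdots a^*(g_p)\,a(g_1)a(\tilde g_1)\cdots a(\tilde g_q)(H_f+1)^{-(p+q)/2}$ --- again with $p-1$ creation operators and now with the exactly matching power $(p-1+q+1)/2$ --- plus scalar terms $\langle g_1,g_j\rangle(\cdots)$ with $p-2$ creation operators, all covered by the inductive hypothesis. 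Collecting, $C_{p,q}$ is expressed through the constants for strictly fewer creation operators.

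The step I expect to be the main obstacle is making the $\omega$-weights come out correctly in the inductive step: the bound controls only $\|\omega^{-1/2}h_i\|$ and never $\|\omega^{1/2}h_i\|$, so one cannot simply commute a power $(H_f+1)^{1/2}$ to the right through a creation operator --- the pull-through $(H_f+1)^{1/2}a^*(h)=a^*(h)(H_f+1+\omega)^{1/2}$ would generate an uncontrollable $a^*(\omega^{1/2}h)$. The device that avoids this is precisely the identity $\|a^*(g)\Xi\|^2=\|a(g)\Xi\|^2+\|g\|^2\|\Xi\|^2$ together with moving annihilation operators rightward, neither of which ever calls for more than the $\omega^{-1/2}$ weight. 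Alongside this, the routine but nonnegligible bookkeeping --- handling $a(k,\lambda)$ as an operator-valued distribution, justifying the Bochner integrals and Fubini, the identity involving $\langle\Psi,{:}H_f^q{:}\,\Psi\rangle$, and the domain questions for $(H_f+1)^{-(p+q)/2}$ --- will all be carried out on the core of finite smooth vectors before the final density argument.
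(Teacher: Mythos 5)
Your argument is correct: the reduction to Wick-ordered monomials with contraction factors $\langle h_i,h_j\rangle$, the $p=0$ base case via Cauchy--Schwarz against $\int\prod_j\omega(\tilde k_j)\|a(\tilde k_1)\cdots a(\tilde k_q)\Psi\|^2\le\langle\Psi,H_f^q\Psi\rangle$, and the induction on the number of creation operators driven by $\|a^*(g)\Xi\|^2=\|a(g)\Xi\|^2+\|g\|^2\|\Xi\|^2$ all close, and you correctly diagnose why one must not pull $(H_f+1)^{1/2}$ through a creation operator (it would demand control of $\|\omega^{1/2}h\|$, which $\|h\|_\omega$ does not give). The paper itself offers no proof of this lemma --- it only cites Fr\"ohlich--Griesemer--Schlein (2001), where the argument is of exactly this standard type --- so your proposal is a legitimate self-contained substitute for the citation.
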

A proof of Lemma  \ref{lem:elemest}  can be found  for example in \cite{FrohlichGriesemerSchlein.2001}. 

\begin{lemma}  \label{FGS.Lemma18} Suppose that   \eqref{eq:potas} and \eqref{eq:assonfield} hold.  Then for all $\epsilon > 0$ there exists a constant $D_\epsilon$ such that the following
holds  in the form sense. We have
\begin{equation} \label{ineqopbounded}
V_- \leq \epsilon H + D_\epsilon .
\end{equation}
If   $s=1/2$, then  in addition
\begin{equation} \label{ineqopbounded223}
 |  B_l(x_j)  |   \leq \epsilon H + D_\epsilon  ,
\end{equation}
for $l=1,2,3$ and $j=1,...,N$.
\end{lemma}
\begin{proof} We note that the case $s=0$ has been shown in the proof of  Lemma 18  in \cite{FrohlichGriesemerSchlein.2001}. We will only show the case $s=1/2$, and thus assume  $s=1/2$.
For  this   we  use that for any operators  $X$ and $Y$ in a Hilbert space  the
following inequality holds in  sense of forms
\begin{equation} \label{opineq}
X^* Y + Y^* X \leq X^2 + Y^2   ,
\end{equation}
which follows from a simple application of  Cauchy's inequality.
For  $\delta > 0$, we see from \eqref{opineq}, the trivial   operator
inequality $Z^* Z \leq \| Z \|^2$,  and   Lemma \ref{lem:elemest}
that
\begin{align}
|B_l(x_j)|   & \leq   \frac{1}{2} \delta B_l(x)^2 + \frac{1}{2} \delta^{-1} \nonumber  \\
&
 \leq \frac{1}{2} \delta  (H_{\rm f} + 1 )^{1/2}  (H_{\rm f} + 1)^{-1/2}  B_l(x)^2(H_{\rm f}+1)^{-1/2}   (H_{\rm f} + 1 )^{1/2} +  \frac{1}{2} \delta^{-1} \nonumber \\
&
 \leq \frac{1}{2} \delta  (H_{\rm f} + 1 )^{1/2} \| B_l(x) (H_{\rm f}+1)^{-1/2} \|^2    (H_{\rm f} + 1 )^{1/2} +  \frac{1}{2} \delta^{-1} \nonumber \\
&
 \leq \frac{1}{2} \delta C  (H_{\rm f} + 1 )   +  \frac{1}{2} \delta^{-1} , \nonumber
\end{align}
for some constant $C$ depending on the norm $\| J_{x,l} \|_\omega$.
Since $\delta > 0$ is arbitrary, we see that  for any $\epsilon > 0$ there is  a constant $E_\epsilon$ such that
\begin{align}
|B_l(x_j)|
&  \leq \epsilon H_{\rm f} + E_\epsilon  .  \label{eq:boundonabsB}
\end{align}
Since  $A(x)^2 \leq  C(H_{\rm f} + 1)$, by  Lemma \ref{lem:elemest}, and
\begin{align*}
p_j^2 & = (p_j-A(x_j) + A(x_j)  )^2 \\
&  \leq  (p_j-A(x_j))^2 + (p_j-A(x_j) ) \cdot A(x_j) + A(x_j)  \cdot (p_j-A(x_j))   + A(x_j)^2  \\
& \leq 2 ((p_j + A(x_j))^2 + 2 A(x_j)^2 ,
\end{align*}
by  \eqref{opineq} again,
it follows  using non-negativity that
$$
\sum_{j=1}^N p_j^2  + H_{\rm f} \leq a ( H + V_- - \sum_{j=1}^N \mu S_j \cdot B(x_j)   ) + b
$$
for some $a,b > 0$. Combining this with  \eqref{eq:potas} and     \eqref{eq:boundonabsB}
it follows that $\sum_{j=1}^N p_j^2 + H_{\rm f}$ is form bounded with respect to $H$.
This  with     \eqref{eq:potas} shows   \eqref{ineqopbounded}, and with \eqref{eq:boundonabsB} it shows  \eqref{ineqopbounded223}.
\end{proof}

For a function  $f: Y^n \to Z$ and  $\sigma\in  \mathfrak{S}_n$  we define \begin{equation} \label{eq:permfun}
(\sigma f)(y_1,...,y_n)  = f(y_{\sigma(1)},...,y_{\sigma(n)}) , \quad y_j \in Y , \ j=1,...,n .   \end{equation}
Fock spaces   over    $L^2$-spaces  can be canonically  identified with direct sums of subspaces of
 $L^2$-spaces over Cartesian products.
 In the situation of non-relativistic qed we obtain the
following identification, see for example  \cite[Theorem 11.10 (a)]{ReedSimon.1972},
\begin{equation} \label{eq:isomeas}
\FF_n(\hh) \cong L_{\rm s}^2((\R^3 \times \Z_2)^n) :=  \{ \psi \in L^2((\R^3 \times \Z_2)^n) : \sigma \psi  = \psi , \ \forall \sigma \in \mathfrak{S}_n\} .
\end{equation}
Thus we can identify  $\psi \in \FF(\hh)$ with a sequence $\psi = (\psi_n )_{n \in \N_0}$, where    $\psi_n \in L_{\rm s}^2((\R^3 \times \Z_2)^n)$. We define   for $k \in \R^3$ and $\lambda \in \Z_2$
the  action of the  formal annihilation operator $a(k,\lambda)$  by
\begin{equation} \label{eq:defofpsi}
(a(k,\lambda) \psi)_n(k_1,\lambda_1,...,k_n,\lambda_n) := (n+1)^{1/2} \psi_{n+1}(k,\lambda,k_1,\lambda,\cdots,k_n,\lambda_n) ,  n \in \N_0 ,
\end{equation}
which is understood as  an identity of measurable functions (note that \eqref{eq:defofpsi} does not necessarily need to define  an element of   Fock space).

\begin{lemma}\label{invdom}  Let $\psi \in \mathcal{D}(H^2)$. Then $a^*(f) \psi \in \mathcal{D}(H)$ for all $f \in \hh_\omega$ with $\omega f \in \hh_\omega$.
\end{lemma}
\begin{proof}
For $\psi \in \mathcal{D}(H^2)$ a straight forward calculation shows that  for all $\varphi \in \mathcal{C}$
\begin{align*}
\langle  H \varphi,  a^*(f) \psi \rangle &  = \langle a(f)  H \varphi ,  \psi \rangle \\
& = \langle [a(f), H ] \varphi, \psi \rangle + \langle H a(f) \varphi, \psi \rangle  \\
& = \langle  C \varphi , \psi \rangle +  \langle  \varphi, a(f)  H  \psi \rangle  ,
\end{align*}
where we  introduced the operator
$$
C =  \sum_{j=1}^N \sum_{l=1}^3  \frac{1}{\sqrt{2}} \left\{  \langle  f , G_{x_j, l } \rangle  ( p_{j,l}  +  A_l(x_j) ) +( p_{j,l}  +  A_l(x_j) )  \langle  f , G_{x_j, l } \rangle   + \mu S_{j,l} \langle f ,  J_{x_j,l} \rangle  \right\}   + a(\omega f )
$$
on $\mathcal{C}$.  From Lemma \ref{lem:elemest} we see that the domain of the adjoint of $C$ contains the natural
domain of $(-\Delta + H_{\rm f})^{1/2}$ and so contains the   the domain of $\mathcal{D}(H)$, by  \eqref{domprob}.  It follows that
\begin{align*}
\langle  H \varphi,  a^*(f) \psi \rangle &  = \langle\varphi ,   C^* \psi \rangle +  \langle  \varphi, a(f)  H  \psi \rangle .
\end{align*}
Since $H$ is by assumption essentially self-adjoint on $\mathcal{C}$, it follows that $a^*(f) \psi  \in \mathcal{D}(H)$.
\end{proof}

\section{A lemma about Abelian limits}

We will need the following proposition about Abelian limits. Together with its proof it can be found in
 \cite[XI.6, Lemma 5]{ReedSimon.1979}. For the convenience of the reader, we provide  the   proof given there.

\begin{proposition}\label{lem:limintident}  Let $f$ be a bounded measurable function on $[0,\infty)$  and suppose that $$\lim_{t \to \infty} \int_0^t f(x) dx = a .$$ Then $\lim_{\epsilon \downarrow 0} \int_0^\infty e^{-\epsilon s} f(s) ds = a $.
\end{proposition}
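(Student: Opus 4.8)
The plan is to prove the Abelian limit statement, Proposition \ref{lem:limintident}, by reducing everything to the behaviour of the partial integral function and applying an integration by parts together with dominated convergence.

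First I would introduce the antiderivative $F(t) := \int_0^t f(x)\,dx$. By hypothesis $F$ is bounded (it is continuous and tends to the finite limit $a$, so it is bounded on $[0,\infty)$), and $F'(s) = f(s)$ for a.e.\ $s$ since $f$ is measurable and bounded, hence locally integrable. For fixed $\epsilon > 0$, integrate by parts in $\int_0^R e^{-\epsilon s} f(s)\,ds$: this gives $e^{-\epsilon R} F(R) - F(0) + \epsilon \int_0^R e^{-\epsilon s} F(s)\,ds$. Since $F(0) = 0$ and $F$ is bounded, letting $R \to \infty$ yields the clean identity
$$
\int_0^\infty e^{-\epsilon s} f(s)\,ds = \epsilon \int_0^\infty e^{-\epsilon s} F(s)\,ds ,
$$
valid for every $\epsilon > 0$ (both sides converge absolutely because $F$ is bounded). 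This is the key structural step: the Abelian average of $f$ has been rewritten as a weighted average of $F$ against the probability density $\epsilon e^{-\epsilon s}\,ds$ on $[0,\infty)$.

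Next I would substitute $s = t/\epsilon$, turning the right-hand side into $\int_0^\infty e^{-t} F(t/\epsilon)\,dt$. As $\epsilon \downarrow 0$, for each fixed $t > 0$ we have $t/\epsilon \to \infty$, so $F(t/\epsilon) \to a$ by hypothesis. The integrand is dominated uniformly by $\|F\|_\infty e^{-t}$, which is integrable on $[0,\infty)$, so the dominated convergence theorem gives $\int_0^\infty e^{-t} F(t/\epsilon)\,dt \to \int_0^\infty e^{-t} a\,dt = a$. Combining with the identity from the previous step yields $\lim_{\epsilon \downarrow 0} \int_0^\infty e^{-\epsilon s} f(s)\,ds = a$, which is the claim.

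I do not expect a genuine obstacle here; the only points requiring a little care are the justification that the boundary term $e^{-\epsilon R} F(R)$ vanishes as $R \to \infty$ (immediate from boundedness of $F$) and the absolute convergence needed to pass from the finite-$R$ integration by parts to the infinite integral, which again follows from $F \in L^\infty$. If one prefers to avoid invoking differentiability of $F$ a.e., the integration by parts can instead be carried out directly on $\int_0^\infty e^{-\epsilon s} f(s)\,ds$ by writing $e^{-\epsilon s} = \epsilon \int_s^\infty e^{-\epsilon u}\,du$ and applying Fubini's theorem to the (absolutely convergent, since $f$ is bounded and the $u$-integral decays) double integral $\epsilon \int_0^\infty \int_0^u e^{-\epsilon u} f(s)\,ds\,du$, which reproduces the same identity $\epsilon \int_0^\infty e^{-\epsilon u} F(u)\,du$ without any appeal to the fundamental theorem of calculus. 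Either route then finishes with the dominated convergence argument above.
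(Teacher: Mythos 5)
Your proof is correct and follows essentially the same route as the paper's: integration by parts to rewrite the Laplace average of $f$ as $\epsilon\int_0^\infty e^{-\epsilon s}F(s)\,ds$, a change of variables to $\int_0^\infty e^{-t}F(t/\epsilon)\,dt$, and dominated convergence using the boundedness of $F$ and $F(t)\to a$. The only difference is that you spell out the finite-$R$ boundary term and offer a Fubini-based alternative to the fundamental theorem of calculus, which the paper leaves implicit.
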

\begin{proof}
Let $g(t) = \int_0^t f(s) ds$ and $q(\epsilon) = \int_0^\infty e^{- \epsilon s} f(s) ds$.
Then $g'(t)=f(t)$, a.e., so integration by parts,   change of variables, and  $\int_0^\infty e^{-s} ds = 1$  implies  that
\begin{align*}
q(\epsilon) & = \int_0^\infty \epsilon e^{-\epsilon s} g(s) ds  = \int_0^\infty  e^{- s} g(\epsilon^{-1} s) ds \\
& = \int_0^\infty  e^{- s} ( g(\epsilon^{-1} s) - a ) ds + a   \\
&  \to 0 + a \quad (\epsilon \downarrow 0 ) ,
\end{align*}
where in the last line we used  that    $g$ is bounded on $\R_+$,   $\lim_{t \to \infty} g(t)  =  a$, and dominated convergence
\end{proof}

\section{Results from Functional Analysis}

We will use the following result which is a version of the well known fact   that weak analyticity implies strong analyticity.

\begin{theorem} \label{thm:ana2}  Let $X$ be a Banach space and $L$ a linear subspace of  $X^*$
such that  $\| x \| = \sup_{l \in L , \| l \| \leq 1} | l(x) |$ for all $x \in X$. Let $D$ be open and $f : D \to X$
a map  such that for all $l \in L$ the  composition $l \circ f : D \to \C$ is analytic and
$\sup_{z \in K} \sup_{l \in L , \| l \| \leq 1 } |l(f(z))| < \infty$ on compact subsets $K$ of $D$. Then $f$ is
strongly analytic.
\end{theorem}
The above theorem follows as a  consequence of
\cite[III. \S 1 Theorem 1.37 and Remark 1.38]{Kato.1980}
 For convenience  of the reader  we provide  a proof.
\begin{proof}
Let $a \in $ and suppose that $\Gamma$ is a circle in $D$ containing $a$,
whose interior is contained in $D$. If $l \in L$, then by assumption $l \circ f$ is analytic and so by Cauchy's formula
\begin{align*}
& l\left( \frac{ f(a+h) - f(a)}{h} \right)  - \frac{d}{dz} l(f(a)) \\
& =  \frac{1}{2\pi i } \oint_\Gamma \left[ \frac{1}{h} \left( \frac{1}{z-(a+h)} - \frac{1}{z-a} \right) - \frac{1}{(z-a)^2} \right] l(f(z)) dz .
\end{align*}
Thus using the triangle inequality we find with $C_\Gamma :=  \sup_{z \in \Gamma}  \sup_{l \in L , \| l \| \leq 1 } |l(f(z))|$ that
\begin{align*}
& \left\|  \frac{ f(a+h) - f(a)}{h}   - \frac{ f(a+h') - f(a)}{h'}  \right\|  \\
& \leq   \frac{1}{2\pi  } C_\Gamma  \int_\Gamma  \left(  \left| \frac{1}{(z-(a+h))(z-a)}  - \frac{1}{(z-a)^2} \right| +  \left| \frac{1}{(z-(a+h'))(z-a)}  - \frac{1}{(z-a)^2} \right|  \right) dS(z) ,
\end{align*}
where $dS$ denotes the surface measure of $\Gamma$ as a one dimensional real submanifold of $\C \cong \R^2$.
Now the right hand side tends to zero as $h$ and $h'$ tend to zero. It follows by completeness that  $\frac{ f(a+h) - f(a)}{h} $ converges in $X$, proving that $f$ is strongly analytic.
\end{proof}

\begin{lemma} \label{intpol}  Let $A$ be a self-adjoint non-negative  operator. Let $B$ be a bounded  operator with $\ran B \subset \mathcal{D}(A)$ such that for some constant $C$ we have
$$
 \| B  A \|  \leq C  , \quad  \| A  B  \|  \leq C .
$$
Then $ A^{1/2} B  A^{1/2} : \mathcal{D}(A) \to \HH$ is a  bounded operator with norm bounded by $C$.
\end{lemma}
\begin{proof}
We use interpolation.  Consider first  a regularization of $A$  as follows $A_n = \frac{n A}{n+ A}$, $n \in \N$. Then, clearly
\begin{align*}
 \| B A_n \| \leq  \| B  A \| \| (1 + A/n)^{-1} \| \leq  \| B  A \| \leq C \\
 \| A_n B   \| \leq   \| (1 + A/n)^{-1} \| \|  A  B  \| \leq  \| A B  \|  \leq C .
\end{align*}
It follows from interpolation, see for example \cite[Proposition 1 in  Appendix to IX.4]{ReedSimon.1975} which  also holds for bounded operators,
that for all $n \in \N$
 $$\| A_n^{1/2}  B A_n^{1/2}  \| \leq C .  $$
Thus
$$  \inn{ f , A^{1/2}  B  A^{1/2} g } = \lim_{n \to \infty}  \inn{ f , A_n^{1/2} B A_n^{1/2} g } $$
for all $f, g \in D(A)$. We conclude that $A^{1/2} B  A^{1/2}$ is bounded by $C$.
\end{proof}

%
%

\bibliographystyle{abbrv}
\bibliography{lit}

\end{document}